\documentclass[onecolumn,draftcls]{IEEEtran}
\usepackage{cite}
\usepackage[cmex10]{amsmath}
\usepackage{graphicx,psfig,amsmath} 
\usepackage{algorithm, algpseudocode}
\usepackage{amssymb, theorem} 
\usepackage{bbm}
\usepackage[dvips]{changebar}
\theoremheaderfont{\scshape} \theoremstyle{plain}
\newtheorem{thm}{Theorem}[section]

\parindent=0cm

\newtheorem{lem}[thm]{Lemma}

\newcommand{\R}{\mathbb{R}}

\usepackage[footnotesize]{subfigure}
\usepackage{url}

\begin{document}

\title{An Inflationary Differential Evolution Algorithm for Space Trajectory Optimization}

\author{Massimiliano Vasile,~\IEEEmembership{Member,~IEEE},
        Edmondo Minisci, and
        Marco Locatelli,~\IEEEmembership{Member,~IEEE}

\thanks{M. Vasile is with the Department of Mechanical Engineering, University of Strathclyde, James Weir Building
75 Montrose Street G1 1XJ
(phone: +44-(0)141-548-4851;
fax: +44-(0)141-552-5105; email: massimiliano.vasile@strath.ac.uk).}
\thanks{E. Minisci is with the School of Engineering, University of Glasgow, James
Watt South Building, G12 8QQ, Glasgow, UK (phone: +44-141-330-8470;
fax: +44-141-330-5560; email: edmondo.minisci@glasgow.ac.uk).}
\thanks{M. Locatelli is with the Dipartimento di Ingegneria dell'Informazione, Universit\'{a} degli Studi di Parma, Parma, via G. P. Usberti, 181/A, 43124, Italy
(email: locatell@ce.unipr.it).}
\thanks{Manuscript received \today; revised \today.}}

\markboth{Journal of \LaTeX\ Class Files,~Vol.~6, No.~1, January~2007}%
{Shell \MakeLowercase{\textit{et al.}}: Bare Demo of IEEEtran.cls for Journals}

\maketitle

\begin{abstract}
In this paper we define a discrete dynamical system that governs the evolution of a population of agents. From the dynamical system, a variant of Differential Evolution is derived. It is then demonstrated
that, under some assumptions on the differential mutation strategy
and on the local structure of the objective function, the proposed
dynamical system has fixed points towards which it converges
with probability one for an infinite number of generations. This property is used to derive an algorithm that
performs better than standard Differential Evolution on some space
trajectory optimization problems. The novel algorithm is then
extended with a guided restart procedure that further increases the
performance, reducing the probability of stagnation in deceptive
local minima.
\end{abstract}

\begin{IEEEkeywords}
Differential Evolution, Global Trajectory Optimization.
\end{IEEEkeywords}

\IEEEpeerreviewmaketitle

\section{Introduction}
\IEEEPARstart{S}{ome} evolutionary heuristics can be interpreted as
discrete dynamical systems governing the movements of a set of
agents (or particles) in the search space. This is well known for
Particle Swarm Optimization (PSO) where the variation of the
velocity of each particle in the swarm is defined by a control term
made of a social component and a individual (or cognitive) component
\cite{clerc:02, trlea:03, clerc:06, poli:08}. The social component
can be interpreted as a behavior dictated by the knowledge acquired
by the whole swarm of particles, while the cognitive component can
be interpreted as a behavior dictated by the knowledge acquired by
each individual particle.

The same principle can be generalized and extended to other
evolutionary heuristics such as Differential Evolution
(DE) \cite{kenneth:05}. The analysis of a discrete dynamical system
governed by the heuristics generating the behavior of individuals in
DE, and in Evolutionary Computation in general, would allow for a
number of considerations on the evolution of the search process and
therefore on the convergence properties of a global optimization algorithm. In particular, four outcomes of the search
process are possible:
\begin{itemize}
  \item Divergence to infinity. In this case the discrete dynamical
  system is unstable, and the global optimization algorithm is not convergent.
  \item Convergence to a fixed point in $D$. In this case the global
  optimization algorithm is simply convergent in $D$ and we can
  define a stopping criterion. Once the search is stopped we can
  define a restart procedure. Depending on the convergence profile,
  the use of a restart procedure can be more or less efficient.
  \item Convergence to a limit cycle in which the same points in $D$
  are re-sampled periodically. Even in this case we can define a
  stopping criterion and a restart procedure.
    \item Convergence to a strange (chaotic) attractor. In this case a
    stopping criterion cannot be clearly defined because different
    points are sampled at different iterations.
\end{itemize}
All outcomes are generally interesting to understand the evolution process.
Identifying under which conditions divergence occurs is important to properly design an algorithm, or to define the appropriate setting, in particular in the case of automatic adaptation of some key parameters. Divergence can be seen as the opposite of the intensification process and can increase diversity and exploration.
The convergence to a chaotic attractor can represent an interesting case of dense random sampling of an extended region. When this happens, the algorithm repeatedly samples the same region but never re-samples the same points. This mechanism could be useful to reconstruct an extended set with a specific property (e.g. $f(x)<\varepsilon$, with $f$ the objective function and $\varepsilon$ a threshold).

The main interest, in this paper, is in the most commonly desired outcome: the convergence to a fixed point
(see for example \cite{Dellnitz02}, in the context of global root finding, for methods to eliminate undesirable behaviors). The convergence to a fixed point, even different from an optimum, can be used to induce a restart of the search process with minimum waste of resources (as it will be demonstrated in this paper).

The analysis of the dynamical properties of the
dynamical system associated with a particular heuristic can give
some insights into the balance between global and local exploration
and the volume of the search space that is covered during the
search. Extensive work on the dynamics of Genetic Algorithms and
general Evolutionary Algorithms can be found in the studies of
Pr\"{u}gel-Bennett et al. \cite{bennett:97, bennett:01} and Beyer
\cite{reeves:98}. 
Non-evolutionary examples of the use of
dynamical system theory to derive effective global optimization
schemes can be found in the works of Sertl and Dellnitz
\cite{dell:06}.

In this paper we propose a discrete
dynamical system, or discrete map, governing the evolution of a population of individuals, being each individual a point in a $d$-dimensional domain $D$. From the discrete dynamical system we derive a variant of Differential
Evolution and we study its convergence properties. In particular, it is proven that, under some assumptions,
the dynamics of the proposed variant of DE converges to a fixed point or to a level set.

Note that, the
proofs proposed in this paper considers the whole $d$-dimensional
discrete dynamical system associated to the evolution of the
population.

The theoretical results are then used to derive a novel algorithm that
performs better than DE strategies \emph{DE/rand/1/bin} and \emph{DE/best/1/bin}\cite{storn97} on some difficult space trajectory design
problems. The novel algorithm is based on a hybridization of the proposed DE variant and the logic behind Monotonic Basin Hopping (MBH) \cite{Lea:00}.

The paper is organized as follows: after introducing the dynamics of
a population of agents, three convergence theorems are demonstrated.
Then an inflationary DE algorithm based on a hybridization between a variant of DE and MBH is derived. A description of the test
cases and  testing procedure follows. After presenting the
results of the tests, the search space is analyzed to derive some
considerations on the general applicability of the results.



\section{Agent Dynamics}\label{sec:agent_dyna}
In this section we start by defining a generic discrete dynamical system governing the motion of an agent in a generic search space $D$. From this general description we derive a variant of what Storn and Price defined as the basic DE strategy\cite{storn97}.

If we consider that a candidate solution vector in a generic
$d$-dimensional box,
\begin{equation}
D=[b_l(1)\; b_u(1)]\times[b_l(2)\;b_u(2)]\times...[b_l(d)\; b_u(d)]
\end{equation}
is associated to an agent, then the heuristic governing its motion in $D$ can be written as the following discrete dynamical system:
\begin{equation}\label{eq:ma_dyn}
\begin{array}{l}
  \mathbf{v}_{i,k+1}=(1-c)\mathbf{v}_{i,k}+\mathbf{u}_{i,k} \\
  \mathbf{x}_{i,k+1}=\mathbf{x}_{i,k}+\nu S(\mathbf{x}_{i,k}+\mathbf{v}_{i,k+1},\mathbf{x}_{i,k}) \mathbf{v}_{i,k+1}
\end{array}
\end{equation}
with
\begin{equation}\label{eq:v_limit}
\nu=\min\left(\left[v_{max}, v_{i,k+1}\right]\right)/v_{i,k+1}
\end{equation}

The function $S(\mathbf{x}_{i,k}+\mathbf{v}_{i,k+1},\mathbf{x}_{i,k})$ is a selection operator that can be
either $1$ if the candidate point $\mathbf{x}_{i,k}+\mathbf{v}_{i,k+1}$ is accepted or $0$ if it is not
accepted. Different evolutionary algorithms have different ways to define $S$.

The control $\mathbf{u}_{i,k}$ defines the next point that will be
sampled for each one of the existing points in the solution space,
the vectors $\mathbf{x}_{i,k}$ and $\mathbf{v}_{i,k}$ define the
current state of the agent in the solution space at stage $k$ of the
search process, $\mathbf{x}_{i,k+1}$ and $\mathbf{v}_{i,k+1}$ define the
state of the agent in the solution space at stage $k+1$ of the
search process and $c$ is a viscosity, or dissipative coefficient,
for the process. Eq. \eqref{eq:v_limit} represents a limit sphere
around the agent $\mathbf{x}_{i,k}$ at stage $k$ of the search
process. Different evolutionary algorithms have different ways to define $\mathbf{u}_{i,k}$, $\nu$ and $c$ (see for example the dynamics of PSO \cite{clerc:02}). In this paper we will focus on the way $\mathbf{u}_{i,k}$, $\nu$ and $c$ are defined in Differential Evolution.

Now consider the $\mathbf{u}_{i,k}$ defined by:
\begin{equation}\label{eq:de}
    \mathbf{u}_{i,k}=\mathbf{e} \left[ (\mathbf{x}_{i_3,k}-\mathbf{x}_{i,k})+F(\mathbf{x}_{i_2,k}-\mathbf{x}_{i_1,k})\right]
\end{equation}
where $i_1$, $i_2$ and $i_3$ are integer numbers randomly chosen in the
interval $[1, \; n_{pop}]\subset\mathbb{N}$ of indexes of the
population, and $\mathbf{e}$ is a mask containing a random number of
$0$ and $1$ according to:
\begin{equation}\label{eq:e_def}
    e(j)= \Bigg\lbrace
           \begin{array}{c}
             1\Rightarrow r_j\leq C_R \\
             0\Rightarrow r_j> C_R \\
           \end{array}
\end{equation}
with $j$ randomly chosen in the interval $[1 \; d]$, so that \textbf{e} contains at least one nonzero component, $d$ the dimension of the solution vector, $r_j$
a random number taken from a random uniform distribution $r_j\in U[0,1]$ and $C_R$
a constant. The product between $\mathbf{e}$ and $\left[ (\mathbf{x}_{i_3,k}-\mathbf{x}_{i,k})+F(\mathbf{x}_{i_2,k}-\mathbf{x}_{i_1,k})\right]$ in Eq. \eqref{eq:de} has to be intended component-wise. The index $i_3$ can be chosen at random (this option will be called exploration
strategy in the remainder of this paper) or can be the index of the best solution vector
$\mathbf{x}_{best}$ (this option will be called convergence strategy in the remainder of this paper). Selecting the best
solution vector or a random one changes significantly the
convergence speed of the algorithm. The selection function $S$ can be either 1 or 0
depending on the relative value of the objective function of the new
candidate individual generated with Eq. \eqref{eq:de} with respect to
the one of the current individual.
\\
In other words the selection function $S$ can be expressed as:
\begin{equation}\label{eq:DE_selection}
S(\mathbf{x}_{i,k}+\mathbf{u}_{i,k},\mathbf{x}_{i,k})=\Big \{ \begin{array}{l}
  1 \;\; \text{if} \;\; f(\mathbf{x}_{i,k}+\mathbf{u}_{i,k})<f(\mathbf{x}_{i,k})\\
  0 \;\; \text{otherwise}
\end{array}
\end{equation}
Note that in this paper we consider minimization problems in which the lowest value of $f$ is sought.

If one takes $c=1$, $\nu=1$ and
$v_{max}=+\infty$, then map \eqref{eq:ma_dyn} reduces to:
\begin{eqnarray}\label{eq:ma_dyn_DE}
  \mathbf{x}_{i,k+1}=\mathbf{x}_{i,k}+S(\mathbf{x}_{i,k}+\mathbf{u}_{i,k},\mathbf{x}_{i,k}) \mathbf{u}_{i,k}
\end{eqnarray}

In the general case the indices
$i_{1}$, $i_{2}$ and $i_{3}$ can assume any value.
However, if the three indexes $i_{1}$, $i_{2}$ and $i_{3}$ are restricted to be mutually different, map \eqref{eq:ma_dyn_DE}, with $\mathbf{u}_{i,k}$ defined in \eqref{eq:de}, \textbf{e} defined in \eqref{eq:e_def} and $S$ defined in \eqref{eq:DE_selection}, is the DE basic strategy \emph{DE/rand/1/bin} defined by Storn and Price in \cite{storn97}. If $i_3$ is taken as the index of the best individual $i_{best}$, and $i_{1}$, $i_{2}$ and $i_{3}$ are mutually different, then one can obtain the DE strategy \emph{DE/best/1/bin}. In fact, if \eqref{eq:de} is substituted in \eqref{eq:ma_dyn_DE} one gets:
\begin{eqnarray}\label{eq:standard_DE}
  \mathbf{x}_{i,k+1}=(\mathbf{1}-S\mathbf{e})\mathbf{x}_{i,k} + S\mathbf{e} \left[ (\mathbf{x}_{i_3,k}+F(\mathbf{x}_{i_2,k}-\mathbf{x}_{i_1,k})\right]
\end{eqnarray}

Now, if the selection function does not accept the candidate point $\mathbf{x}_{i,k}+\mathbf{u}_{i,k}$, then $S=0$, therefore \eqref{eq:standard_DE} reduces to $\mathbf{x}_{i,k+1}=\mathbf{x}_{i,k}$ (i.e. the state of the agent remains unchanged). If the candidate point is accepted instead, then the new location of the agent is:
\begin{eqnarray}\label{eq:standard_DE2}
  \mathbf{x}_{i,k+1}=(\mathbf{1}-\mathbf{e})\mathbf{x}_{i,k} + \mathbf{e} \left[ (\mathbf{x}_{i_3,k}+F(\mathbf{x}_{i_2,k}-\mathbf{x}_{i_1,k})\right]
\end{eqnarray}
The quantity in square brackets is the basic DE strategy \emph{DE/rand/1/bin} or the variant \emph{DE/best/1/bin}, respectively for $i_3$ random or $i_3=i_{best}$. The mask $\mathbf{e}$ together with $(\mathbf{1}-\mathbf{e})\mathbf{x}_{i,k}$ represent the cross-over operator in the basic DE strategy \cite{storn97}, with $\mathbf{1}$ a vector of $1$'s and the products $\mathbf{1}\mathbf{x}_{i,k}$ and $\mathbf{e}\mathbf{x}_{i,k}$ that are both component-wise. In fact, $\mathbf{x}_{i,k+1}$ is made of the components of
$\mathbf{x}_{i,k}$ that correspond to the zero elements of \textbf{e}, and the components of $\left[ (\mathbf{x}_{i_3,k}+F(\mathbf{x}_{i_2,k}-\mathbf{x}_{i_1,k})\right]$ that correspond to the nonzero elements of \textbf{e}.
If the assumption of mutually different indexes is dropped then map \eqref{eq:ma_dyn_DE} can be seen as a further variant of the basic DE strategy.

In compact matrix form for the entire population, map \eqref{eq:ma_dyn_DE} can be written as:
\begin{equation}\label{eq:ma_dyn_DE_matrix}
\mathbf{X}_{k+1}=\mathbf{J}_k \mathbf{X}_{k}
\end{equation}
with the $i$-th line of matrix $\mathbf{X}_k\in \R^{n_{pop}\times
d}$ is point $\mathbf{x}_{i,k}$.

To be more precise, in the case
$\mathbf{x}_{i,k+1}\not \in D$, then every component $j$ violating
the boundaries $\mathbf{b}_l$ and $\mathbf{b}_u$ is projected back
into $D$ by picking the new value
$x(j)_{i,k+1}=b_l(j)+\varsigma(b_u(j)-b_l(j))$, where $\varsigma$
is taken from a random uniform distribution $\varsigma \in U[0,1]$.  The interest is now in
the properties of map \eqref{eq:ma_dyn_DE_matrix}. We start by
observing that if $S(\mathbf{x}_{i,k}+\mathbf{u}_{i,k},\mathbf{x}_{i,k})=1\Leftrightarrow
f(\mathbf{x}_{i,k}+\mathbf{u}_{i,k})<f(\mathbf{x}_{i,k})$, for $i=1,...,n_{pop}$, the global minimizer
$\mathbf{x}_{g}\in D$ is a fixed point for map \eqref{eq:ma_dyn_DE}
since every point $\mathbf{x}\in D$ is such that
$f(\mathbf{x})\geq f(\mathbf{x}_g)$.

Then, let us assume that at every iteration $k$ we can find two
connected subsets $D_k$ and $D_k^*$ of $D$ such that
$f(\mathbf{x}_k)<f(\mathbf{x}_k^*), \forall \mathbf{x}_k\in
D_k,\forall \mathbf{x}_k^*\in D_k^*\setminus D_k$, and let us also
assume that $P_k\subseteq D_k$ while $P_{k+1}\subseteq D_k^*$
(recall that $P_k$ and $P_{k+1}$ denote the populations at iteration
$k$ and $k+1$ respectively). If $\mathbf{x}_l$ is the lowest local
minimum in $D_k$, then $\mathbf{x}_l$ is a fixed point in $D_k$ for
map \eqref{eq:ma_dyn_DE}. In fact, every point generated by map \eqref{eq:ma_dyn_DE} must be in $D_k$ and
$f(\mathbf{x}_l)<f(\mathbf{x}),\forall \mathbf{x}\in D_k$.

Moreover, under the above assumptions the reciprocal distance of
the individuals cannot grow indefinitely because of the map
\eqref{eq:ma_dyn_DE}, therefore the map cannot be divergent.

Finally, a matrix $\mathbf{X}_k$ whose lines are
$n_{pop}$ replications of a unique point $\mathbf{x}\in D$, is
always a fixed point for the map \eqref{eq:ma_dyn_DE_matrix}.

Now one can consider two variants of Differential Evolution: one in which
index $i_{1}$ can be equal to index $i_{2}$ but both are different from $i_{3}$ and one in which index $i_{1}$ can be equal to index $i_{3}$ but both are different from $i_{2}$. In these cases two interesting results can be proven. First of all we prove that if $i_{1}$ can be equal to index $i_{2}$ then the population can collapse to a single point in $D$.

\begin{thm}
\label{thm:collapse_map}

If, for every $k$, $i_{1}$ is equal to $i_{2}$ with strictly positive probability and $f(\mathbf{x}_{i_{best},k})=f(\mathbf{x}_{i,k}) \Leftrightarrow \mathbf{x}_{i_{best},k}=\mathbf{x}_{i,k}$(i.e., the set of best points within the population is made up by a single point, multiple copies of which possibly exist),then the population collapses to a single point with probability 1 for $k\rightarrow\infty$ under the effect of the discrete dynamical system (\ref{eq:ma_dyn_DE})
\end{thm}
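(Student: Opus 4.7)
The plan is to exhibit a single-generation ``cascade'' event of probability bounded away from zero uniformly in the iteration $k$, under which every agent is mapped onto $\mathbf{x}_{i_{best},k}$, and then to conclude with a geometric-trials argument. Absorption at the collapsed state is automatic thanks to the fixed-point observation recorded just before the theorem statement: any matrix whose rows are $n_{pop}$ copies of a single point is invariant under \eqref{eq:ma_dyn_DE_matrix}.

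I would first substitute $i_1=i_2$ into \eqref{eq:de}, which kills the term $F(\mathbf{x}_{i_2,k}-\mathbf{x}_{i_1,k})$ and reduces the trial for agent $i$ to the binary crossover $(\mathbf{1}-\mathbf{e})\mathbf{x}_{i,k}+\mathbf{e}\,\mathbf{x}_{i_3,k}$. I then restrict to the further sub-event in which $\mathbf{e}=\mathbf{1}$ and $i_3$ happens to index an agent whose position equals $\mathbf{x}_{i_{best},k}$; the trial is then identically $\mathbf{x}_{i_{best},k}$. By the uniqueness hypothesis, any agent with $\mathbf{x}_{i,k}\neq\mathbf{x}_{i_{best},k}$ has $f(\mathbf{x}_{i,k})>f(\mathbf{x}_{i_{best},k})$, so selection rule \eqref{eq:DE_selection} accepts the trial and $\mathbf{x}_{i,k+1}=\mathbf{x}_{i_{best},k}$; while for $i=i_{best}$ (and for any other agent already co-located with the best) the trial coincides with the current position, so the strict inequality in \eqref{eq:DE_selection} causes rejection and the best agent does not move.

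The per-agent probability of this sub-event is bounded below uniformly in $k$ by the product of $P(i_1=i_2)\geq\alpha>0$ (the hypothesis of the theorem), $P(\mathbf{e}=\mathbf{1})\geq C_R^{d-1}>0$ (once the compulsory nonzero component of \eqref{eq:e_def} is set, the remaining $d-1$ entries are each independently $1$ with probability $C_R$), and the probability that $i_3$ is routed to a best-valued index, which equals $1$ in the \emph{DE/best} convergence strategy and is at least $1/n_{pop}$ in the \emph{DE/rand} exploration strategy. Since the tuples $(i_1^{(i)},i_2^{(i)},i_3^{(i)},\mathbf{e}^{(i)})$ are drawn independently across agents and their distributions do not depend on the current population state, the probability of the joint cascade event over all $n_{pop}$ agents factorizes and is bounded below by some $p>0$ depending only on the algorithm parameters.

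This uniform lower bound yields the conclusion: conditional on any past history in which the population has not yet collapsed, the cascade event at the next generation has probability at least $p$, so the probability that it fails for $K$ consecutive generations is at most $(1-p)^K\to 0$, and it therefore occurs in finite time with probability one. The main obstacle I anticipate lies in the second step: one must ensure that the best agent itself is not displaced during the cascade, for otherwise the population might collapse onto a cluster of co-fitness points rather than a single point. This is precisely where the strict inequality in \eqref{eq:DE_selection} combined with the single-best-point hypothesis becomes indispensable, since together they force every trial sampled at the best agent to be either equal to $\mathbf{x}_{i_{best},k}$ (and hence rejected) or strictly worse (and again rejected).
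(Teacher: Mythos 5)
Your proposal follows essentially the same route as the paper's proof: you isolate the single-generation event $i_1=i_2$, $\mathbf{e}=\mathbf{1}$, $i_3$ indexing a best-valued agent, note that selection \eqref{eq:DE_selection} then maps every non-best agent onto $\mathbf{x}_{i_{best},k}$, and conclude that this positive-probability event occurs with probability one as $k\rightarrow\infty$. Your version is in fact more careful than the paper's on two points the paper leaves implicit --- that the best agent itself is not displaced (rejection by the strict inequality) and that the per-generation lower bound is uniform in $k$ so the geometric-trials argument closes --- but these are refinements of the same argument, not a different one.
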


\begin{proof}
If $i_{1}$ is equal to $i_{2}$ with strictly positive probability, and $f(\mathbf{x}_{i_{best},k})=f(\mathbf{x}_{i,k}) \Leftrightarrow \mathbf{x}_{i_{best},k}=\mathbf{x}_{i,k}$ for every $k$, then map (\ref{eq:ma_dyn_DE}) at each iteration $k$ can generate,
with strictly positive probability, a displacement
$\mathbf{u}_{i,k}=(\mathbf{x}_{i_{3},k}-\mathbf{x}_{i,k})$ for each member $i$, $i=1,\ldots,n_{pop}$.
This happens if, for each $i\in [1,\ldots,n_{pop}]$, the following event, whose probability is strictly positive,
occurs
$$
e(s)=1,\ s=1,\ldots,d,\ \ i_3\neq i_1=i_2.
$$

\begin{equation}
\label{eq:indexi3}
i_3\in \{i\ :\ f(x_{i,k})=f(x_{i_{best},k})\} \end{equation}

Then, at each iteration we have a strictly
positive probability that the two or more individuals collapse into the single point $x_{i_{best},k}$ and for $k\rightarrow\infty$ the whole population
collapses to a single point with probability one.
\end{proof}

After a collapse, the population cannot progress further and needs to be restarted.
It is important to evaluate the probability of a total collapse, in fact if the collapse is progressive the population can keep on exploring but if the collapse is instantaneous the evolution ceases. Let us analyze the worst case in which, $i_3=i_{best}$. Then if for all the individuals $i_1=i_2$ and $\mathbf{e}=1$ we have a total instantaneous collapse. The probability of having $i_1=i_2$
is $1/n_{pop}$ and the probability of having $n_{pop}-1$ individuals collapsing at the same iteration $k$ is $(1/n_{pop})^{n_{pop}-1}$. Furthermore, given a $C_R\neq 1$ the probability to have $\mathbf{e}=1$ is $C_R^{d-1}$. Thus the total probability of an instantaneous total collapse is $[(1/n_{pop}) C_R^{d-1}]^{n_{pop}-1}$. The event has positive but small probability to happen. The complementary probability is $1-[(1/n_{pop}) C_R^{d-1}]^{n_{pop}-1}$ and the probability to have at least one total collapse after $k_h$ generations is $1-\{1-[(1/n_{pop}) C_R^{d-1}]^{n_{pop}-1}\}^{k_h}$.
Therefore, allowing the indexes to assume the value $i_1=i_2$ introduces the following interesting property. If the population is stagnating, and the condition $f(\mathbf{x}_{i_{best},k})=f(\mathbf{x}_{i,k}) \Leftrightarrow \mathbf{x}_{i_{best},k}=\mathbf{x}_{i,k}$ holds true, eventually there will be a total collapse and the population can be restarted with no risk to interrupt the evolutionary process.

If $i_{1}$ is equal to $i_{3}$ with strictly
positive probability but both are always different from $i_{2}$, then convergence to a fixed point can
be guaranteed if the function $f$ is strictly quasi-convex \cite{convex}
in $D$, and $D$ is compact and convex. In other words, under the
given assumptions, the population will converge to a single point.
We immediately remark that such point is not necessarily a local minimizer of the problem.
\begin{lem}
\label{lem:delta_min} Let $f$ be a continuous and strictly quasi-convex function
on a set $D$ and let us assume that $D$ is compact, convex and is not a singleton. Then, the following minimization problem with
$F\in (0,1)$ has a strictly positive minimum value
$\delta_r(\epsilon)$ for $\epsilon$ small enough:

\begin{equation}\label{eq:min_d}
  \begin{array}{lll}
    \delta_r(\epsilon)=&\min & g(\mathbf{y}_1,\mathbf{y}_2)=f(\mathbf{y}_2)-f(F\mathbf{y}_1+(1-F)\mathbf{y}_2)\\
    &s.t. & \mathbf{y}_1,\mathbf{y}_2 \in D \\
    & & \|\mathbf{y}_1-\mathbf{y}_2\|\geq \epsilon \\
    & & f(\mathbf{y}_1)\leq f(\mathbf{y}_2) \\
  \end{array}
\end{equation}
\end{lem}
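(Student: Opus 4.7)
The plan is to combine compactness of the feasible set with strict quasi-convexity: the first gives attainment of the minimum, the second gives strict positivity of $g$ at every feasible point.

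First, I would check that the feasible set
$$\Omega_\epsilon=\{(\mathbf{y}_1,\mathbf{y}_2)\in D\times D:\ \|\mathbf{y}_1-\mathbf{y}_2\|\geq\epsilon,\ f(\mathbf{y}_1)\leq f(\mathbf{y}_2)\}$$
is nonempty for $\epsilon$ small enough. Since $D$ is not a singleton, its diameter $\mathrm{diam}(D)>0$, so for any $\epsilon\le\mathrm{diam}(D)$ one can pick two points of $D$ at distance at least $\epsilon$ and label the one with smaller $f$-value as $\mathbf{y}_1$. Next, $\Omega_\epsilon$ is closed in the compact set $D\times D$ because $(\mathbf{y}_1,\mathbf{y}_2)\mapsto\|\mathbf{y}_1-\mathbf{y}_2\|$ and $f$ are continuous, so $\Omega_\epsilon$ is compact.

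Second, $g(\mathbf{y}_1,\mathbf{y}_2)=f(\mathbf{y}_2)-f(F\mathbf{y}_1+(1-F)\mathbf{y}_2)$ is continuous on $D\times D$, because $f$ is continuous and the map $(\mathbf{y}_1,\mathbf{y}_2)\mapsto F\mathbf{y}_1+(1-F)\mathbf{y}_2$ is continuous (and stays in $D$ by convexity). Weierstrass's theorem therefore guarantees that the minimum $\delta_r(\epsilon)$ is attained at some pair $(\mathbf{y}_1^*,\mathbf{y}_2^*)\in\Omega_\epsilon$.

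Third, I would apply strict quasi-convexity at the minimizer. Feasibility forces $\|\mathbf{y}_1^*-\mathbf{y}_2^*\|\geq\epsilon>0$, so $\mathbf{y}_1^*\neq\mathbf{y}_2^*$; since $F\in(0,1)$, the point $F\mathbf{y}_1^*+(1-F)\mathbf{y}_2^*$ is a strict convex combination of two distinct points of $D$. The definition of strict quasi-convexity then yields
$$f\bigl(F\mathbf{y}_1^*+(1-F)\mathbf{y}_2^*\bigr)<\max\{f(\mathbf{y}_1^*),f(\mathbf{y}_2^*)\}=f(\mathbf{y}_2^*),$$
the equality holding because the feasibility constraint $f(\mathbf{y}_1^*)\leq f(\mathbf{y}_2^*)$ is active in the maximum. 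Rearranging gives $g(\mathbf{y}_1^*,\mathbf{y}_2^*)>0$, hence $\delta_r(\epsilon)>0$.

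I do not anticipate a real obstacle, but the one subtle point worth pinning down explicitly is that the pair $(\mathbf{y}_1^*,\mathbf{y}_2^*)$ realising the minimum must actually lie in $\Omega_\epsilon$ (in particular $\mathbf{y}_1^*\neq\mathbf{y}_2^*$), since it is precisely the $\epsilon$-separation constraint that lets one invoke the strict form of quasi-convexity; without it, strict quasi-convexity only gives $g\ge 0$, which is not enough.
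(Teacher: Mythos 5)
Your proposal is correct and follows essentially the same route as the paper: nonemptiness and compactness of the feasible set, Weierstrass to attain the minimum, and strict quasi-convexity to conclude $g>0$ at the minimizer. You are in fact slightly more careful than the paper, which asserts $g>0$ for all pairs in $D\times D$ rather than only on the feasible set where $\mathbf{y}_1\neq\mathbf{y}_2$ and $f(\mathbf{y}_1)\leq f(\mathbf{y}_2)$ hold --- the point you correctly flag as the crux.
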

\begin{proof}

Since $f$ is strictly quasi-convex $g(\mathbf{y}_1,\
\mathbf{y}_2)>0$, $\forall \mathbf{y}_1,\  \mathbf{y}_2 \in D$;
furthermore, the feasible region is nonempty (if $\epsilon$ is small enough and $D$ is not a singleton) and compact. Therefore,
according to Weierstrass' theorem the function $g$ attains its
minimum value over the feasible region. If we denote by
$(\mathbf{y}^*_1,\mathbf{y}^*_2)$ a global minimum point of the
problem, then we have
\begin{equation}
\label{eq:deltaeps}
\delta_r(\epsilon)=g(\mathbf{y}^*_1,\mathbf{y}^*_2)>0.
\end{equation}
\end{proof}
\begin{thm}
\label{thm:fixed_map} Assume that index $i_1$ can be equal to $i_3$. Given a function $f$ that is strictly
quasi-convex over the compact and convex set $D$, and a population $P_k\in D$, then if
$F\in(0,1)$ and $S(\mathbf{x}_{i,k}+\mathbf{u}_{i,k},\mathbf{x}_{i,k})=1\Leftrightarrow
f(\mathbf{x}_{i,k}+\mathbf{u}_{i,k})<f(\mathbf{x}_{i,k})$, for $i=1,...,n_{pop}$ , the population $P_k$
converges to a single point in $D$ for $k\rightarrow \infty$ with probability one.
\end{thm}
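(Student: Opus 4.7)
Let $F_k = \sum_{i=1}^{n_{pop}} f(\mathbf{x}_{i,k})$. By the selection rule \eqref{eq:DE_selection} each individual's objective value is non-increasing in $k$, so $F_k$ is monotone non-increasing; continuity of $f$ on the compact set $D$ then gives a lower bound, hence $F_k \to F^* \in \R$ and the increments satisfy $\sum_{k}(F_k - F_{k+1}) = F_0 - F^* < \infty$. My strategy is to combine this summability with Lemma \ref{lem:delta_min} to force the diameter of the population to shrink to zero.

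The workhorse one-step event is as follows. For any index $i$ with $\mathbf{x}_{i,k} \neq \mathbf{x}_{i_{best},k}$, consider the event that the random draws satisfy $i_1 = i_3 = i$ and $i_2 = i_{best}$, and that the crossover mask $\mathbf{e}$ equals $\mathbf{1}$. This event has probability at least some $\pi > 0$ uniform in $k$ and independent of the past (its lower bound is proportional to $C_R^{d-1}/n_{pop}^3$). Substituting $i_1 = i_3 = i$ into \eqref{eq:de} gives $\mathbf{u}_{i,k} = F(\mathbf{x}_{i_{best},k} - \mathbf{x}_{i,k})$, so the candidate is the convex combination $(1-F)\mathbf{x}_{i,k} + F\mathbf{x}_{i_{best},k} \in D$. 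Because $f(\mathbf{x}_{i_{best},k}) \leq f(\mathbf{x}_{i,k})$, strict quasi-convexity forces the candidate to have strictly smaller $f$-value than $\mathbf{x}_{i,k}$, hence it is accepted. Invoking Lemma \ref{lem:delta_min} with $\mathbf{y}_1 = \mathbf{x}_{i_{best},k}$ and $\mathbf{y}_2 = \mathbf{x}_{i,k}$ then yields the quantitative drop $f(\mathbf{x}_{i,k}) - f(\mathbf{x}_{i,k+1}) \geq \delta_r(\epsilon)$ whenever $\|\mathbf{x}_{i,k} - \mathbf{x}_{i_{best},k}\| \geq \epsilon$, for $\epsilon$ small enough that $\delta_r(\epsilon) > 0$.

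I would then argue by contradiction: suppose $\Pr\{\limsup_k \mathrm{diam}(P_k) > 0\} > 0$. Discretising the threshold over $\epsilon \in \{1/n\}_{n \geq 1}$, I can fix a single $\epsilon > 0$ such that on a set of positive probability there exist infinitely many $k$ and indices $i(k)$ satisfying $\|\mathbf{x}_{i(k),k} - \mathbf{x}_{i_{best},k}\| \geq \epsilon$; this uses the elementary bound $\mathrm{diam}(P_k) \leq 2 \max_i \|\mathbf{x}_{i,k} - \mathbf{x}_{i_{best},k}\|$. At each such iteration the one-step event applied to $i(k)$ has conditional probability at least $\pi$ given the past, and when it fires $F_k$ drops by at least $\delta_r(\epsilon)$. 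The conditional (L\'evy) second Borel--Cantelli lemma then forces this event to fire infinitely often almost surely on the bad set, producing an unbounded cumulative decrement of $F_k$ and contradicting $\sum_k(F_k - F_{k+1}) < \infty$. Hence $\mathrm{diam}(P_k) \to 0$ almost surely; combining this with compactness of $D$ and the monotone convergence of each $f(\mathbf{x}_{i,k})$ to a common limit, the population collapses to a single point of $D$.

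The main obstacle I foresee is the measurable selection of the ``bad'' index $i(k)$ and of the random threshold $\epsilon$ needed to apply the conditional Borel--Cantelli lemma rigorously; this is handled by discretising $\epsilon$ over $\{1/n\}_{n\geq 1}$ and a union bound over the finitely many possible indices $i$. A secondary delicate point is the passage from diameter collapse to genuine convergence of the population to a single element of $D$: per-iteration displacements are bounded by $(1+F)\mathrm{diam}(P_k)$, so a Cauchy-type argument combined with compactness of $D$ should identify a unique limit, but making this step quantitative may require slightly sharper control on the decay rate of $\mathrm{diam}(P_k)$ than the one extracted above.
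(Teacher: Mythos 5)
Your proposal is correct and follows essentially the same route as the paper's proof: the key event is the draw $i_1=i_3$ together with a full crossover mask, which turns the mutation into the convex combination $F\mathbf{y}_1+(1-F)\mathbf{y}_2$ covered by Lemma~\ref{lem:delta_min}, and the resulting uniform decrease $\delta_r(\epsilon)$ occurring infinitely often with probability one contradicts the boundedness of $f$ from below on $D$. Your write-up is somewhat more careful than the paper's (summing $f$ over the population to get a summable decrement, invoking the conditional Borel--Cantelli lemma explicitly, and flagging the final passage from vanishing diameter to convergence to a single limit point, which the paper glosses over), but the underlying argument is the same.
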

\begin{proof}
By contradiction let us assume that we do not have convergence to a
fixed point. Then, it must hold that:
\begin{equation}\label{eq:false_stat}
    \inf_k \max \big\{ \| \mathbf{x}_{i,k}-\mathbf{x}_{j,k}\|,i,j\in[1,...,n_{pop}] \big\} \geq
    \epsilon >0
\end{equation}
At every generation $k$ the map can generate with a strictly
positive probability, a displacement
$F(\mathbf{x}_{i^*,k}-\mathbf{x}_{j^*,k})$ for $\mathbf{x}_{j^*,k}$, where $i^*$ and $j^*$
identify the individuals with the maximal reciprocal distance, such
that the candidate point is
$\mathbf{x}_{cand}=F\mathbf{x}_{i^*,k}+(1-F)\mathbf{x}_{j^*,k}$ with
$f(\mathbf{x}_{i^*,k})\leq f(\mathbf{x}_{j^*,k})$. Since the
function $f$ is strictly quasi-convex, the candidate point is
certainly better than $\mathbf{x}_{j^*,k}$ and, therefore, is
accepted by $S$. Now, in view of Eq. \eqref{eq:false_stat} and of Lemma
\eqref{lem:delta_min} we must have that,
\begin{equation}\label{eq:f_red}
f(\mathbf{x}_{cand})\leq f(\mathbf{x}_{j,k})-\delta_r(\epsilon)
\end{equation}
Such reduction will occur with probability one infinitely often, and
consequently the function value of at least one individual will be,
with probability one, infinitely often reduced by
$\delta_r(\epsilon)$. But in this way the value of the objective
function of such individual would diverge to $-\infty$, which is a
contradiction because $f$ is bounded from below over the compact set
$D$.
\end{proof}

If a local minimum satisfies some regularity assumptions (e.g., the Hessian at
the local minimum is definite positive), then we can always define a neighborhood
such that: (i) map (\ref{eq:ma_dyn_DE}) will be unable to accept points outside the neighborhood;
(ii) the function is strictly convex within the neighborhood.
Therefore, if at some iteration $k$ the population $P_k$ belongs to such a neighborhood,
we can guarantee that map (\ref{eq:ma_dyn_DE_matrix}) will certainly
converge to a fixed point made up by $n_{pop}$ replications of a single point belonging to the neighborhood.
 For general functions, we can not always
guarantee that the population will converge to a fixed point, but we
can show that the maximum difference between the objective function
values in the population converges to 0, i.e. the points in the
population tend to belong to the same level set.
The proof is closely related to that of Theorem II.1 and we still need to assume that indices $i_1$ and $i_2$ can be equal.

\begin{thm}\label{thm:fixed_level}
Assume that index $i_1$ can be equal to $i_2$. Given a function $f$, limited from below over $D$, and a population
$P_k\in D$, then if $F\in(0,1)$ and
$S(\mathbf{x}_{i,k}+\mathbf{u}_{i,k},\mathbf{x}_{i,k})=1\Leftrightarrow
f(\mathbf{x}_{i,k}+\mathbf{u}_{i,k})<f(\mathbf{x}_{i,k})$, for $i=1,...,n_{pop}$ , the following holds
\begin{equation}
\max_{i,j\in [1,...,n_{pop}]}\ \mid
f(\mathbf{x}_{j,k})-f(\mathbf{x}_{i,k})\mid \rightarrow 0,
\end{equation}
as $k\rightarrow \infty$ with probability one.
\end{thm}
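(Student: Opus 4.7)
The plan is to mirror the contradiction strategy of the proof of Theorem \ref{thm:collapse_map}, replacing the role of the reciprocal distances with the spread of objective values. Because the selection rule only accepts strict improvements, for each $i$ the sequence $k\mapsto f(\mathbf{x}_{i,k})$ is non-increasing; consequently both $M_k:=\max_i f(\mathbf{x}_{i,k})$ and $m_k:=\min_i f(\mathbf{x}_{i,k})$ are non-increasing, and since $f$ is bounded below on $D$ both converge. The goal is thus to show that $M_k-m_k\to 0$ almost surely, i.e., that their limits coincide.

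For the contradiction I would suppose that with positive probability there exist $\delta>0$ and $k_0$ such that $M_k-m_k\geq \delta$ for every $k\geq k_0$ (monotone convergence of both sequences makes ``infinitely often'' and ``eventually'' coincide here). As already exploited in the proof of Theorem \ref{thm:collapse_map}, the event $i_1=i_2$, which by assumption has strictly positive probability, kills the $F(\mathbf{x}_{i_2,k}-\mathbf{x}_{i_1,k})$ contribution, and when additionally $\mathbf{e}=\mathbf{1}$ the candidate for the $i$-th individual collapses to $\mathbf{x}_{i_3,k}$. Applied to the update acting on the individual $i^*_k$ that realizes $M_k$, with $i_3$ selected equal to an index $j^*_k$ realizing $m_k$, this joint event has conditional probability bounded below by some constant $p_0>0$ depending only on $n_{pop}$, $d$ and $C_R$, independently of the current configuration. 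Whenever it occurs, the candidate has objective value $m_k\leq M_k-\delta$ and is therefore accepted, so $f(\mathbf{x}_{i^*_k,k})$ drops by at least $\delta$, and hence the population sum $\Sigma_k:=\sum_i f(\mathbf{x}_{i,k})$ decreases by at least $\delta$.

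To close the argument I would invoke the conditional (L\'evy) second Borel--Cantelli lemma: on the assumed ``bad'' event the conditional success probabilities are bounded below by $p_0$ for all $k\geq k_0$, so they sum to $+\infty$ and such single-step drops occur infinitely often almost surely. This would force $\Sigma_k$ below $n_{pop}\inf_D f$, contradicting the boundedness of $f$ on $D$. Hence the bad event has probability zero, and a countable union over $\delta=1/n$ yields $M_k-m_k\to 0$ with probability one, which is the stated conclusion. As in Theorem \ref{thm:collapse_map} the only delicate step is confirming that the per-iteration probability $p_0$ is genuinely uniform in $k$ and in the current state of the population, so that the Borel--Cantelli conclusion is really at our disposal; the remainder is a direct transcription of the template already used there.
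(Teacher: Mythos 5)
Your argument is correct, but it is organized differently from the paper's proof, so a comparison is in order. The paper first shows that, with strictly positive probability at each iteration, \emph{every} individual simultaneously draws $i_1=i_2$, $\mathbf{e}=\mathbf{1}$ and an $i_3$ indexing a point of the best set $S_k^*$, so that the entire population is mapped onto $S_k^*$ in a single step; this yields an infinite subsequence $\{k_h\}$ of ``equalization'' iterations at which all function values coincide. The spread $\max_{i,j}\mid f(\mathbf{x}_{j,k})-f(\mathbf{x}_{i,k})\mid$ between consecutive equalization times is then bounded by the drop $\Delta_h$ of the common value, and $\Delta_h\rightarrow 0$ because otherwise $f$ would be driven to $-\infty$. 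You bypass the equalization subsequence entirely: you only require the single individual attaining $M_k$ to draw $i_1=i_2$, $\mathbf{e}=\mathbf{1}$ and $i_3$ equal to an index attaining $m_k$, which moves it onto a best point (conveniently already in $D$, so no boundary repair is triggered) and decreases the population sum $\Sigma_k$ by at least $\delta$ whenever $M_k-m_k\geq\delta$; the conditional second Borel--Cantelli lemma then drives $\Sigma_k$ to $-\infty$ on the bad event. Your per-iteration event is far cheaper (one favorable draw rather than $n_{pop}$ simultaneous ones, hence a much larger $p_0$, although positivity is all either proof needs), your explicit appeal to the conditional Borel--Cantelli lemma and to the monotonicity of each $f(\mathbf{x}_{i,k})$ makes the ``infinitely often with probability one'' step more rigorous than the paper's informal assertion, and the countable union over $\delta=1/n$ handles the quantifiers cleanly. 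What the paper's route buys in exchange is the stronger intermediate fact that the function values become exactly equal infinitely often, which supports the level-set interpretation given in the surrounding discussion. One caveat shared by both arguments: they tacitly assume the boundary-repair step never lets an accepted point increase $f(\mathbf{x}_{i,k})$, which is needed for your monotonicity of $M_k$, $m_k$ and $\Sigma_k$ exactly as it is needed for the paper's implicit claim that $\Delta_h\geq 0$.
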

\begin{proof}
Let $S_k^*$ denote the set of best points in population $P_k$, i.e.
\begin{equation}
S_k^*=\big\{ \mathbf{x}_{j,k}\ :\  f(\mathbf{x}_{j,k})\leq
f(\mathbf{x}_{i,k})\ \ \forall\ i\in [1,\ldots,n_{pop}]  \big\}
\end{equation}
At each iteration $k$ there is a strictly positive probability that
the whole population will be reduced to $S_k^*$ at the next
iteration. To show this it is enough to substitute the condition stated in
(\ref{eq:indexi3}) with the following condition $$ i_3\in \{i\ :\ x_{i,k}\in S_k^*\} $$ (basically, with respect to Theorem II.1, we only drop the requirement that at each iteration the best value of the population is attained at a single point).

In other words, there is a strictly positive probability
for the event that the population at a given iteration will be made
up of points all with the same objective function value. Therefore,
such an event will occur infinitely often with probability one. Let us
denote with $\{k_h\}_{h=1,\ldots}$ the infinite subsequence of
iterations at which the event is true, and let
\begin{equation}
\Delta_h= f(\mathbf{x}_{i,k_{h}})-f(\mathbf{x}_{i,k_{h+1}})
\end{equation}
be the difference in the objective function values at two
consecutive iterations $k_h$ and $k_{h+1}$ (note that, since at
iterations $k_h$, $h=1,\ldots$ the objective function values are all
equal and any $i$ can be employed in the above definition). It holds
that for all $i,j\in [1,\ldots, n_{pop}]$
\begin{equation}
\mid f(\mathbf{x}_{j,k})-f(\mathbf{x}_{i,k}) \mid \,\,  \leq \Delta_h\ \ \
\forall\ k\in [k_h,k_{h+1}]
\end{equation}
Therefore, if we are able to prove that $\Delta_h\rightarrow 0$, as
$h\rightarrow \infty$, then we can also prove the result of our
theorem. Let us assume by contradiction that
$\Delta_h\not\rightarrow 0$. Then, there will exist a $\delta>0$
such that $\Delta_h\geq \delta$ infinitely many times. But this
would lead to function values diverging to $-\infty$ and,
consequently, to a contradiction.
\end{proof}

As a consequence of these results, for the choice of the index $i_1$, $i_2$ and $i_3$ non-mutually different, a possible stopping criterion for the
dynamics in Eq. \eqref{eq:ma_dyn_DE} would be to stop when the difference
between the function values in the population drops below a given
threshold. However, this could cause a premature halt of the evolutionary process. Indeed, even if at some iteration the function value at all points of the population is equal, this does not necessarily mean that the algorithm will be unable to make
further progress (unless all points in the population are multiple copies of a unique point). Therefore, since the evolution definitely ceases when the population
contracts to a single point, we can alternatively use as a
stopping criterion the fact that the maximum distance between points
in the population drops below a given threshold.

It is important to observe that the contraction of the population does not depend on whether the function $f$ is minimized or maximized but depends only on the definition of $S$ and on whether the function is bounded or unbounded.

\begin{figure}[tbh]
\begin{center}
\subfigure[]{ \label{fig:convx}
\includegraphics[width=8.0cm]{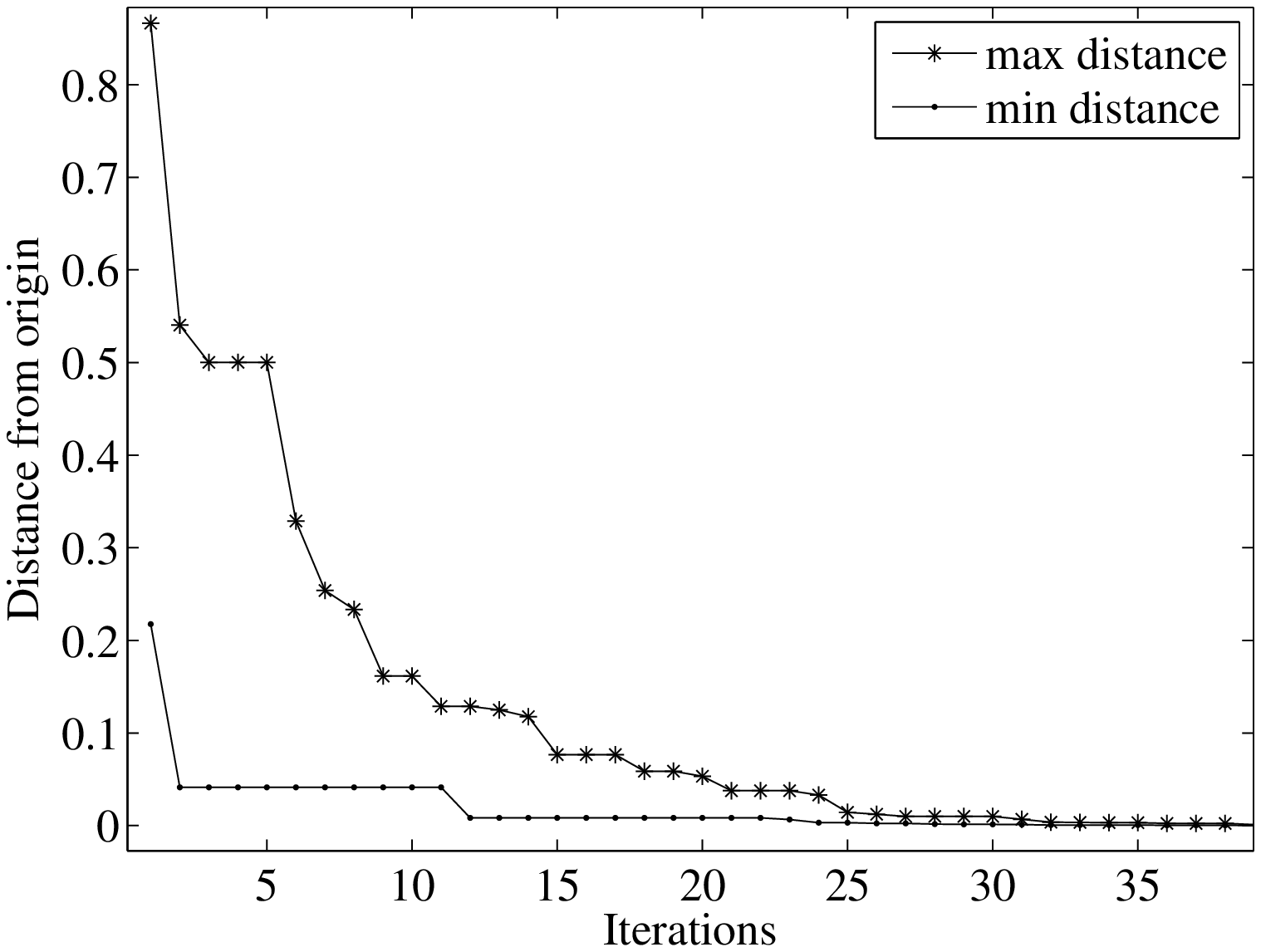}}
\subfigure[]{ \label{fig:detJ}
\includegraphics[width=8.0cm]{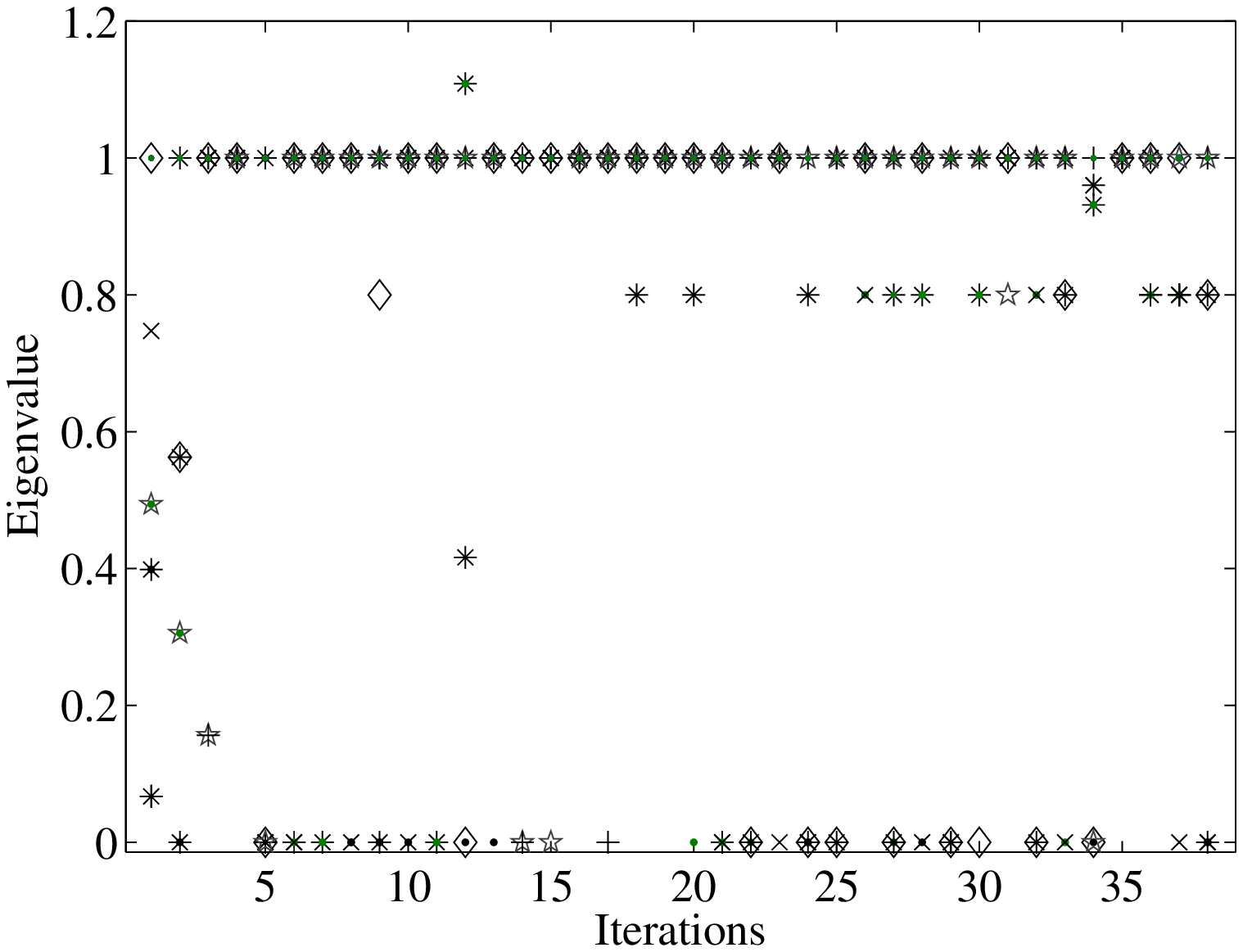}}
\caption{Contraction of map (\ref{eq:ma_dyn_DE_matrix}): a) max and
min distance of the individuals in the population from the origin,
b) eigenvalues with the number of evolutionary iterations.}
\label{fig:DE_Diss}
\end{center}
\end{figure}

To further verify the contraction properties of the dynamics in Eq.
\eqref{eq:ma_dyn_DE} one can look at the eigenvalues of the matrix
$\mathbf{J}_k$.

\begin{algorithm}[!t]
\caption {Inflationary Differential Evolution Algorithm (IDEA)}
\label{alg:de_restart}
\begin{algorithmic}[1]
\State Set values for $n_{pop}$, $C_R$, $F$, $iun_{max}$, and $tol_{conv}$, set
$n_{feval}=0$ and $k=1$

 \State Initialize $\mathbf{x}_{i,k}$ and
$\mathbf{v}_{i,k}$ for all $i\in[1,...,n_{pop}]$

\State Create the vector of random values $\mathbf{r}\in U[0, 1]$
and the mask $\mathbf{e}=\mathbf{r}<C_R$

\ForAll{$i\in[1,...,n_{pop}]$}

\State Select three individuals $\mathbf{x}_{i_1}, \mathbf{x}_{i_2},
\mathbf{x}_{i_3}$

\State Create the vector $\mathbf{u}_{i,k} =
\mathbf{e}[(\mathbf{x}_{i_3,k}-\mathbf{x}_{i,k})+F(\mathbf{x}_{i_2,k}-\mathbf{x}_{i_1,k})]$

\State $\mathbf{v}_{i,k+1}=(1-c)\mathbf{v}_{i,k}+\mathbf{u}_{i,k}$
\State Compute $S$ and $\nu$ \State
$\mathbf{x}_{i,k+1}=\mathbf{x}_{i,k}+S\nu \mathbf{v}_{i,k+1}$

\State $n_{feval}=n_{feval}+1$

  \EndFor
\State $k=k+1$

 \State
$\rho_A=\max(\|\mathbf{x}_{i,k}-\mathbf{x}_{j,k}\|)$ for $\forall
\mathbf{x}_{i,k},\mathbf{x}_{j,k}\in P_{sub}\subseteq P_k$

\If{$\rho_A<tol_{conv}\rho_{A,max}$}

\State Run a local optimizer $a_l$ from $\mathbf{x}_{best}$ and let
$\mathbf{x}_l$ be the local minimum found by $a_l$

\If { $f(\mathbf{x}_l)< f(\mathbf{x}_{best})$ }

\State $f_{best} \leftarrow f(\mathbf{x}_l)$ \EndIf
 \If{$f(\mathbf{x}_{best})<f_{min}$}
  \State $f_{min} \leftarrow f(\mathbf{x}_{best})$
  \State $iun=0$
 \Else
  \State $iun=iun+1$
 \EndIf
 \If{$iun\leq iun_{max}$}
  \State Define a bubble $D_l$ such that $\mathbf{x}_{best}\in D_l$ for $\mathbf{x}_{best} \in P_{sub}$ and $\forall P_{sub}\subseteq P_k$

  \State $A_g=A_g+\{\mathbf{x}_{best}\}$ where $\mathbf{x}_{best}=\arg
\min_i f(\mathbf{x}_{i,k})$

  \State Initialize $\mathbf{x}_{i,k}$ and $\mathbf{v}_{i,k}$ for all
$\in[1,...,n_{pop}]$, in the bubble $D_l\subseteq D$
 \Else
 \State Define clusters in the archive and compute the baricenter
 $\mathbf{x}_{c,j}$ of each cluster with $j=1,...,n_{c}$.
  \State  Initialize $\mathbf{x}_{i,k}$ and $\mathbf{v}_{i,k}$ for all $i\in[1,...,n_{pop}]$, in
  $D$ such that $\forall i,j, \|\mathbf{x}_{i,k}-\mathbf{x}_{c,j}\|>\delta_c$
 \EndIf
\EndIf

 \State{\textbf{Termination}} Unless $n_{feval}\geq
n_{fevalmax}$, $goto$ Step~3

\end{algorithmic}
\end{algorithm}

If the population cannot diverge, the eigenvalues cannot have a norm
always $>1$. Furthermore, according to Theorem \ref{thm:fixed_map}
if the function $f$ is strictly quasi-convex in $D$, the population
converges to a single point in $D$, which implies that the map
(\ref{eq:ma_dyn_DE}) is a contraction in $D$ and therefore the
eigenvalues should have a norm on average lower than 1. This can be
illustrated with the following test: Consider a population of 8
individuals and a $D$ enclosing the minimum of a paraboloid with
 the minimum at the origin. For a $C_R$=1.0 and $F$=0.8, we compute, for each step $k$, the distances
of the closest and farthest individuals from the local minimum and
the eigenvalues of the matrix $\mathbf{J}$. Fig. \ref{fig:DE_Diss}
shows the behavior of the eigenvalues and of the distance from the
origin. From the figure, we can see that for all iterations, the
value of the norm of all the eigenvalues is in the interval $[0,1]$
except for one eigenvalue at iteration 12. However, since every expansion is not accepted by the selection function $S$ and for each iteration a number of eigenvalues have modulus lower than 1, the population contracts as
represented in Fig. \ref{fig:DE_Diss}(a).

If multiple minima are contained in $D$, then it can be
experimentally verified that the population contracts to a number of
clusters initially converging to a number of local minima and
eventually to the lowest among all the identified local minima.

The local convergence properties of map (\ref{eq:ma_dyn_DE})
suggest its hybridization with the heuristic
implemented in Monotonic Basin Hopping (MBH).

MBH, first introduced in \cite{WalDoy97} in the context of molecular conformation
problems) is a simple but effective global optimization method. At
each iteration MBH: (i) generates a sample point within a neighborhood of size $2\Delta$ of
the current local minimum (e.g., by a random displacement of each
coordinate of the current local minimum); (ii) starts a local search
from the newly generated sample point; (iii) moves to the newly detected
local minimum only if its function value is better than the function
value at the current local minimum. The initial local minimum is
usually randomly generated within the feasible region. Moreover, if
no improvement is observed for a predetermined number of sample points $n_{samples}$, a
restart mechanism might be activated. The neighborhood of the local minimum $\mathbf{x}_l$ is defined as $[\mathbf{x}_l-\Delta, \mathbf{x}_l+\Delta]^d$. A proper definition of $\Delta$ is essential for the performance of MBH: too small a size would not allow MBH to escape from the current
local minimum, while too large a value would make the search
degenerate in a completely random one. The local search performed at
each iteration can be viewed as  a {\em dynamical system} where the evolution of the systems at each iteration is controlled by some {\em map}. Under suitable assumptions, the systems converge
to a fixed point. For instance, if $f$ is convex and $C^2$ in a
small enough domain containing a local minimum which satisfies some
regularity conditions, Newton's map converges quadratically to a
single fixed point (the local minimum) within the domain. This
observation leads to the above mentioned hybridization of DE with
MBH: the dynamical system corresponding to a local search is
replaced by the one corresponding to DE. More precisely, if map
(\ref{eq:ma_dyn_DE}), either for a cluster $P_{sub}\subseteq P_k$ or
for the entire population $P_k$, contracts, we can define a bubble
$D_l\subseteq D$, around the best point within the cluster
$\mathbf{x}_{best}$, and re-initialize a subpopulation $P_{sub}$ in
$D_l$. Such operation is performed as soon as the maximum distance
$\rho_A=\max(\|\mathbf{x}_i-\mathbf{x}_j\|)$ among the elements in
the cluster collapses below a value $tol_{conv}\rho_{A,max}$, where
$\rho_{A,max}$ is the maximum $\rho_{A}$ recorded during the
convergence of the map (\ref{eq:ma_dyn_DE_matrix}).


In order to speed up convergence, the best solution
$\mathbf{x}_{best}$ of the cluster is refined through a local search
started at it, leading to a local minimum $\mathbf{x}_l$, which is
saved in an archive $A_g$. The bubble is defined, similarly to the neighborhood of MBH, as
$D_l=[\mathbf{x}_{l}-\Delta \; \mathbf{x}_{l}+\Delta]^d$. The
overall process leads to Algorithm \ref{alg:de_restart}. Note that
the contraction of the population given, for example, by the metric
$\rho_A$, is a stopping criterion that does not depend explicitly on
the value of the objective function but on the contractive
properties of the map in Eq. \eqref{eq:ma_dyn_DE}. Some remarks
follow.
\begin{itemize}
\item[1.] Convergence of DE to a single point can not always be guaranteed, and consequently, we can not always
guarantee that the contraction condition $\rho_A\leq tol_{conv}\rho_{A,max}$ will be satisfied at some iteration.
Therefore, in order to take into account this possibility, we need to introduce a further stopping criterion for DE,
such as a maximum number of iterations. We point out, however, that such alternative stopping criterion has never become active in our experiments.
\item[2.] Even when DE converges to a single point, this is not guaranteed to be a local minimum. For this reason
we always refine the best observed solution through a local search.
\item[3.] If the search space is characterized by a single funnel structure
\cite{Loca05}, the restart of the population in the bubble allows
the algorithm to move towards the global optimum by progressively
jumping from one minimum to a better one. On the other hand, if
multiple funnels or multiple isolated minima exist, a simple restart
of the population inside a bubble might not be sufficient to avoid
stagnation. A way to overcome this problem is to use global
re-sampling: when the value of the best solution does not change for
a predefined number of iterations $iun_{max}$, the population is
restarted. The restart procedure collects the solutions in the
archive into $n_c$ clusters with baricenter $x_{c,j}$ for
$j=1,...,n_c$, then each agent $\mathbf{x}_i$ of the new population
is generated so that $\|\mathbf{x}_i-\mathbf{x}_{c,j}\|>\delta_c$.
Note that a restart mechanism for DE was previously
proposed also by Peng et al. \cite{peng:09} in a variant of the adaptive
Differential Evolution JADE. However, in the work of
Peng et al. the restart criterion and restart strategy are
substantially different from the ones proposed here. For example,
although we record the local minima in an external archive, we do
not prevent the algorithm from searching in the surroundings of the
recorded minima. On the contrary, we combine a local restart in a
bubble surrounding the final point returned by DE, according to the heuristics of
MBH, with a more global restart sampling outside the bubbles.
A complete review of the existing variants of Differential Evolution can found in the work of Neri et al. \cite{neri10}.
Other restart mechanisms have recently been
adopted to improve other evolutionary algorithms such as G-CMAES
\cite{auger:05} or hybrid methods \cite{sentinella:07}, but also in these cases
the restart criterion and restart strategy are
substantially different from the ones proposed here.
\item[4.] A question, mainly of theoretical interest, is about convergence.
General results stating conditions under which convergence to the
global minimum (with probability 1) is guaranteed can be found,
e.g., in \cite{pinter:84} and \cite{rudolph:96}. Such conditions are
not fulfilled by standard DE, while IDEA fulfills them if parameter
$\delta_c$ employed in the restart mechanism is "small enough" (if
too large, some portions of the feasible region, possibly including
the global minimum, might remain unexplored). We point out
that, although in practice we are not interested in the behavior of
an algorithm over an infinite time horizon, global convergence
justifies re-running the search process with an increased number of
function evaluations should the results be unsatisfactory. As we will
show in the remainder of this paper, IDEA produces performance
steadily increasing with the number of function evaluations without
the need to change its settings.
\end{itemize}

\section{Trajectory Model and Problem Formulation}
The modified differential evolution algorithm derived in Section
\ref{sec:agent_dyna} is applied to the solution of four real world
cases. The four cases are all multigravity assist (MGA) trajectory design
problems, three of them with deep space manoeuvres (DSM) and one
with no DSM's. In this section we describe the trajectory model and
we formulate the global optimization problem that will be tackled
through Algorithm \ref{alg:de_restart}.

\subsection{Trajectory Model with no DSM's}\label{sec:trj_nodsm}
Multi-gravity assist transfers with no deep space maneuvers can be
modeled with a sequence of conic arcs connecting a number of
planets. The first one is the departure planet, the last one is the
destination planet and at all the intermediate ones the spacecraft
performs a gravity assist maneuver. Given $N_P$ planets $P_i$ with
$i=1,...,N_P-1$, each conic arc is computed as the solution of a
Lambert's problem \cite{battin:99} given the departure time from
planet $P_i$ and the arrival time at planet $P_{i+1}$. The solution
of the Lambert's problems yields the required incoming and outgoing
velocities at a swing-by planet $v_{in}$ and $v_{rout}$ (see Fig.
\ref{fig:whole_trj}). The swing-by is modeled through a linked-conic
approximation with powered maneuvers\cite{becerra:04}, i.e., the
mismatch between the required outgoing velocity $v_{rout}$ and
 the achievable outgoing velocity $v_{aout}$ is compensated through a
$\Delta v$ maneuver at the pericenter of the gravity assist
hyperbola. The whole trajectory is completely defined by the
departure time $t_0$ and the transfer time for each leg $T_i$, with
$i=1,...,N_P-1$.


\begin{figure}[tb]
\begin{center}
\vskip -5mm
 \includegraphics[width=4.5cm]{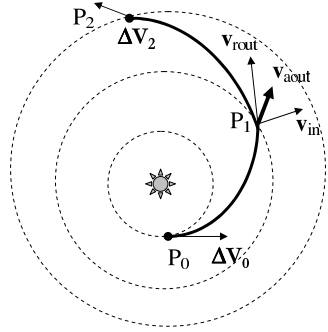}
\vskip -5mm \caption{Trajectory model with no DSM's}
\label{fig:whole_trj}
\end{center}
\end{figure}

The normalized radius of the pericenter $r_{p,i}$ of each swing-by
hyperbola is derived a posteriori once each powered swing-by
manoeuvre is computed. Thus, a constraint on each pericenter radius
has to be introduced during the search for an optimal solution. In
order to take into account this constraint, the objective function
is augmented with the weighted violation of the constraints:
\begin{equation}\label{eq:evvejs_fobj}
    f(\mathbf{x})=\Delta v_0+\sum_{i=1}^{N_p-2}\Delta v_i + \Delta
v_f+\sum_{i=1}^{N_p-2}w_i(r_{p,i}-r_{pmin,i})^2
\end{equation}
for a solution vector:
\begin{equation}\label{eq:generalMGA_x}
    \mathbf{x}=[t_0,T_1,T_2,...,T_{N_P-1}]^T
\end{equation}

\subsection{Trajectory Model with DSM's}\label{sec:trj_dsm}
A general MGA-DSM trajectory can be modeled through a sequence of
$N_P-1$ legs connecting $N_P$ celestial bodies (Fig.
\ref{fig:mga_trj_model})\cite{vasile:06}. In particular if all
celestial bodies are planets, each leg begins and ends with an
encounter with a planet. Each leg $i$ is made of two conic arcs: the
first, propagated analytically forward in time, ends where the
second solution of a Lambert's problem begins. The two arcs have a
discontinuity in the absolute heliocentric velocity at their
matching point $M$. Each DSM is computed as the vector difference
between the velocities along the two conic arcs at the matching
point. Given the transfer time $T_i$ and the variable $\alpha_i\in
[0,1]$ relative to each leg $i$, the matching point is at time
$t_{DSM,i}=t_{f,i-1}+\alpha_iT_i$, where $t_{f,i-1}$ is the final
time of the leg $i-1$. The relative velocity vector $\mathbf{v}_0$
at the departure planet can be a design parameter and is expressed
as:
\begin{equation}\label{eq:mga_v0}
    \mathbf{v}_0=v_0[\sin\delta \cos\theta, \sin\delta \sin\theta,
\cos\delta]^T
\end{equation}
with the angles $\delta$  and $\theta$ respectively representing the
declination and the right ascension with respect to a local
reference frame with the $x$ axis aligned with the velocity vector
of the planet, the $z$ axis normal to orbital plane of the planet
and the $y$ axis completing the coordinate frame. This choice allows for an
easy constraint on the escape velocity and asymptote direction
while adding the possibility of having a deep space maneuver in the
first arc after the launch. This is often the case when the escape
velocity must be fixed due to the launcher capability or to the
requirement of a resonant swing-by of the Earth (Earth-Earth
transfers). In order to have a uniform distribution of random points
on the surface of the sphere defining all the possible launch
directions, the following transformation has been applied:
\begin{equation}\label{eq:mga_sphereangle}
  \bar{\theta}=\frac{\theta}{2\pi} \qquad \bar{\delta}=\frac{\cos(\delta +\pi/2)+1}{2}
\end{equation}
It results that the sphere surface is uniformly sampled when a
uniform distribution of points for $\bar{\theta},\bar{\delta}\in
[0,1]$ is chosen. Once the heliocentric velocity at the beginning of
leg $i$, which can be the result of a swing-by maneuver or the
asymptotic velocity after launch, is computed, the trajectory is
analytically propagated until time $t_{DSM,i}$. The second arc of
leg $i$ is then solved through a Lambert's algorithm, from $M_i$,
the Cartesian position of the deep space maneuver, to $P_i$, the
position of the target planet of phase $i$, for a time of flight
$(1-\alpha_i)T_i$. Two subsequent legs are then joined together with
a gravity assist manoeuvre. The effect of the gravity of a planet is
to instantaneously change the velocity vector of the spacecraft.

The relative incoming velocity vector and the outgoing velocity vector
at the planet swing-by have the same modulus but different
directions; therefore the heliocentric outgoing velocity results to
be different from the heliocentric incoming one. In the linked conic
model, the spacecraft is assumed to follow a hyperbolic trajectory
with respect to the swing-by planet. The angular difference between
the incoming relative velocity $\tilde{\mathbf{v}}_i$  and the
outgoing one $\tilde{\mathbf{v}}_o$ depends on the modulus of the
incoming velocity and on the pericenter radius $r_i$. Both the
relative incoming and outgoing velocities belong to the plane of the
hyperbola. However, in the linked-conic approximation, the maneuver
is assumed to occur at the planet, where the planet is a point mass
coinciding with its center of mass. Therefore, given the incoming
velocity vector, one angle is required to define the attitude of the
plane of the hyperbola $\Pi$. Although there are different possible choices
for the attitude angle $\gamma$, the one proposed in Ref. 7 has been
adopted (see Fig. \ref{fig:ga_model}), where $\gamma$ is the angle between the
vector $\mathbf{n}_{\Pi}$, normal to the hyperbola plane $\Pi$, and
the reference vector $\mathbf{n}_r$, normal to the plane
containing the incoming relative velocity and the velocity of the
planet $\mathbf{v}_{P}$.
\begin{figure}[tb]
\begin{center}
\vskip -5mm
 \includegraphics[width=7.5cm]{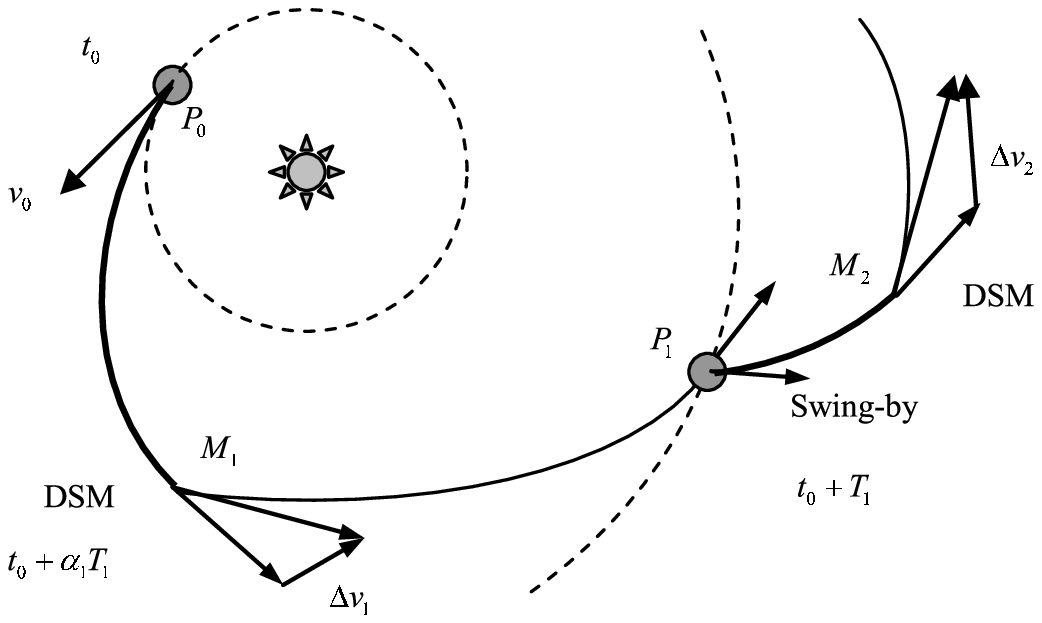}
\vskip -5mm \caption{Schematic representation of a multiple gravity
assist trajectory} \label{fig:mga_trj_model}
\end{center}
\end{figure}

\begin{figure}[tb]
\begin{center}
\vskip -5mm
 \includegraphics[width=5.5cm]{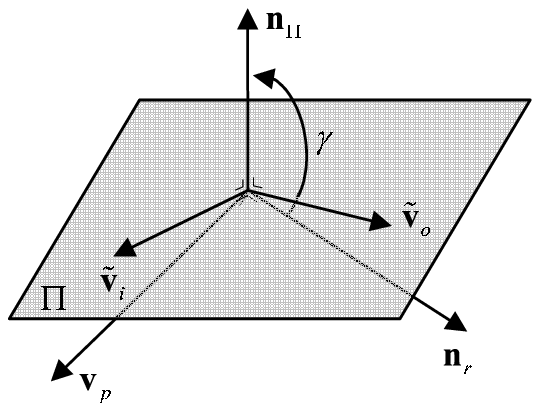}
\vskip -5mm \caption{Schematic representation of a multiple gravity
assist trajectory} \label{fig:ga_model}
\end{center}
\end{figure}

Given the number of legs of the trajectory $N_L=N_P-1$, the complete
solution vector for this model is:
\begin{align}\label{eq:mga_x}
 \mathbf{x}=&[v_0,\bar{\theta},\bar{\delta},t_0,\alpha_1,T_1,
    \gamma_1,r_{p,1},\alpha_2,T_2,..., \\
    &\gamma_i,r_{p,i},T_{i-1},\alpha_{i-1},...,\gamma_{N_L-1},r_{p,N_L-1},\alpha_{N_L},T_{N_L}] \nonumber
\end{align}
where $t_0$ is the departure date. Now, the design of a
multi-gravity assist transfer can be transcribed into a general
nonlinear programming problem, with simple box constraints, of the
form:
\begin{equation}\label{eq:min_fobj}
    \min_{\mathbf{x}\in D}f(\mathbf{x})
\end{equation}

One of the appealing aspects of this formulation is its solvability
through a general global search method for box constrained problems.
Depending on the kind of problem under study, the objective function
can be defined in different ways. Here we choose to focus on
minimizing the total $\Delta v$ of the mission, therefore the
objective function $f(\mathbf{x})$ is:
\begin{equation}\label{eq:mga_fobj}
    f(\mathbf{x})=v_0+\sum_{i=1}^{N_p}\Delta v_i + \Delta v_f
\end{equation}
where $\Delta v_i$ is the velocity change due to the DSM in the
$i$-th leg, and $\Delta v_f$ is the maneuver needed to inject the
spacecraft into the final orbit.

\section{Test Problems}
We consider a benchmark made of four different test-cases:  two
versions of the MGA transfer from the Earth to Saturn of the
Cassini-Huygens mission, a multi-gravity assist transfer to the
comet 67P/Churyumov-Gerasimenko (similar to the Rosetta mission),
and a multi-gravity assist rendezvous transfer with mid-course
manoeuvres to Mercury (similar to the Messenger mission). Algorithm
\ref{alg:de_restart}, called IDEA, is applied to the solution of the
four cases and compared to standard Differential Evolution
\cite{kenneth:05} and Monotonic Basin Hopping \cite{WalDoy97,
Addis-et-al}.

\subsection{Cassini with no DSM's}

The first test case is a multi gravity assist trajectory from the
Earth to Saturn following the sequence
Earth-Venus-Venus-Earth-Jupiter-Saturn (EVVEJS). There are six
planets and the transfer is modeled as in Section
\ref{sec:trj_nodsm}, thus the solution vector is:
\begin{equation}\label{eq:evvejs_x}
    \mathbf{x}=[t_0,T_1,T_2,T_3,T_4,T_5]^T
\end{equation}

The final $\Delta v_f$ is the $\Delta v$ needed to inject the
spacecraft into an ideal operative orbit around Saturn with a
pericenter radius of 108950 km and an eccentricity of 0.98. The
weighting functions $w_i$ are defined as follows:
\begin{align}\label{eq:evvejs_w}
  w_i &= 0.005[1-\mathrm{sign}(r_{p,i}-r_{pmin,i})], \quad i=1,...,3\\
  w_4&=0.0005[1-\mathrm{sign}(r_{p,4}-r_{pmin,4})]  \nonumber
\end{align}
with the minimum normalized pericenter radii $r_{pmin,1}=1.0496$,
$r_{pmin,2}=1.0496$, $r_{pmin,3}=1.0627$ and $r_{pmin,4}=9.3925$.
For this case the dimensionality of the problem is six, with the
search space $D$ defined by the following intervals: $t_0\in [-1000,
0]$MJD2000, $T_1\in [30, 400]$d, $T_2\in [100, 470]$d, $T_3\in [30,
400]$d, $T_4\in [400, 2000]$d, $T_5\in [1000, 6000]$d. The best
known solution is $f_{best}=4.9312$ km/s, with
$\mathbf{x}_{best}$=[--789.75443770458, 158.301628961437, 449.385882183958, 54.7050296906556, 1024.5997453164, 4552.72068790619]$^T$.

\subsection{Cassini with DSM's}

The second test case is again a multi gravity assist trajectory from
the Earth to Saturn following the sequence
Earth-Venus-Venus-Earth-Jupiter-Saturn (EVVEJS), but a deep space
manoeuvre is allowed along the transfer arc from one planet to the
other according to the model presented in Section \ref{sec:trj_dsm}.
Although from a trajectory design point of view, this problem is
similar to the first test case, the model is substantially different
and therefore represents a different problem from a global
optimization point of view. Since the transcription of the same
problem into different mathematical models can affect the search for
the global optimum, it is interesting to analyze the behavior of the
same set of global optimization algorithms applied to two different
transcriptions of the same trajectory design problem.


Here, $\Delta v_f$ is defined as the modulus of the vector difference between the
velocity of Saturn at arrival and the velocity of the spacecraft at
the same time. For this case the dimensionality of the problem is
22, with the search space $D$ is defined by the following intervals:
$t_0\in [-1000, 0]$MJD2000,  $v_0\in[3, 5]$km/s, $\bar{\theta}\in[0,
1]$, $\bar{\delta}\in[0, 1]$, $T_1\in [100, 400]$d, $T_2\in [100,
500]$d, $T_3\in [30, 300]$d, $T_4\in [400, 1600]$d, $T_5\in [800,
2200]$d, $\alpha_1\in [0.01, 0.9]$, $\alpha_2\in [0.01, 0.9]$,
$\alpha_3\in [0.01, 0.9]$, $\alpha_4\in [0.01, 0.9]$, $\alpha_5\in
[0.01, 0.9]$, $r_{p,1}\in[1.05, 6]$, $r_{p,2}\in[1.05, 6]$,
$r_{p,3}\in[1.15, 6.5]$, $r_{p,4}\in[1.7, 291]$, $\gamma_1\in[0,
2\pi]$, $\gamma_2\in[0, 2\pi]$, $\gamma_3\in[0, 2\pi]$,
$\gamma_4\in[0, 2\pi]$.
The best known solution is
$f_{best}=8.3889$ km/s, $\mathbf{x}_{best}$=[--780.917853635368, 3.27536879103551, 0.782513100225235, 0.378682006044345, 169.131920055057, 424.13242396494, 53.296452710059, 2199.98648654574, 0.795774035295027, 0.530055267286, 0.126002760289258,        0.0105947672634,  0.0381505843013, 1.35556902792788, 1.05001740672886, 1.30699201995999, 71.3749247783128,          3.15842153037544, 3.53046280721895, 3.12561791754698, 3.08422162979462]$^T$.

\subsection{Rosetta Mission}
The third test case is a multi gravity assist trajectory from the
Earth to the comet 67P/Churyumov-Gerasimenko following the gravity
assist sequence that was planned for the spacecraft Rosetta:
Earth-Earth-Mars-Earth-Earth-Comet. The trajectory model is the one
described in Section \ref{sec:trj_dsm} but the objective function does not include $v_0$.

For this case the dimensionality of the problem is
22, with the search space $D$ is defined by the following intervals:
$t_0\in [1460, 1825]$MJD2000,  $v_0\in[3, 5]$km/s, $\bar{\theta}\in[0,
1]$, $\bar{\delta}\in[0, 1]$, $T_1\in [300, 500]$d, $T_2\in [150,
800]$d, $T_3\in [150, 800]$d, $T_4\in [300, 800]$d, $T_5\in [700,
1850]$d, $\alpha_1\in [0.01, 0.9]$, $\alpha_2\in [0.01, 0.9]$,
$\alpha_3\in [0.01, 0.9]$, $\alpha_4\in [0.01, 0.9]$, $\alpha_5\in
[0.01, 0.9]$, $r_{p,1}\in[1.05, 9]$, $r_{p,2}\in[1.05, 9]$,
$r_{p,3}\in[1.05, 9]$, $r_{p,4}\in[1.05, 9]$, $\gamma_1\in[0,
2\pi]$, $\gamma_2\in[-\pi, \pi]$, $\gamma_3\in[0, 2\pi]$,
$\gamma_4\in[0, 2\pi]$.

The best known solution is $f_{best}$=1.34229 km/s, with solution vector
$x_{best}$=[1542.65536672006, 4.48068107888312, 0.935220667497966, 0.9909562486258, 365.24235847396, 707.540858648698, 257.417859715383, 730.483434305258, 1850, 0.310501108489873, 0.809061227121068, 0.0124756484551758, 0.0466967002704, 0.43701236871638, 1.8286351998512, 1.05, 2.80973511169638, 1.18798981835459, 2.61660601734377, --0.215250274241349, 3.57950314115394, 3.46467471264343]$^T$.

\subsection{Messenger Mission}
The fourth problem is a multi-gravity assist trajectory from the
Earth to planet Mercury following the sequence of planetary
encounters of the first part of the Messenger mission. As for the
previous test case, the trajectory model is the one described in
Section \ref{sec:trj_dsm}.

For this case the dimensionality of the problem is
18, with the search space $D$ is defined by the following intervals:
$t_0\in [1000, 4000]$MJD2000,  $v_0\in[1, 5]$km/s, $\bar{\theta}\in[0,
1]$, $\bar{\delta}\in[0, 1]$, $T_1\in [200, 400]$d, $T_2\in [30,
400]$d, $T_3\in [30, 400]$d, $T_4\in [30, 400]$d, $\alpha_1\in [0.01, 0.99]$, $\alpha_2\in [0.01, 0.99]$,
$\alpha_3\in [0.01, 0.99]$, $\alpha_4\in [0.01, 0.99]$, $r_{p,1}\in[1.1, 6]$, $r_{p,2}\in[1.1, 6]$,
$r_{p,3}\in[1.1, 6]$, $\gamma_1\in[-\pi,
\pi]$, $\gamma_2\in[-\pi, \pi]$, $\gamma_3\in[-\pi, \pi]$.

The best known solution is $f_{best}$ =
8.631 km/s, with solution vector $\mathbf{x}_{best}
$=[1171.14619813253, 1.41951376601752, 0.628043728560056, 0.500000255697689, 399.999999999969, 178.921469111868, 299.279691870106, 180.689114497891, 0.236414009949924, 0.0674215615945254, 0.832992171208578, 0.312514378885353, 1.7435422021558, 3.03087330660319, 1.10000000000119, 0.219820823285448, 0.477475660779879, 0.225898117795826]$^T$.

\vspace{0.5cm}
 Note that, the search space for each one of the
trajectory models is normalized so that $D$ is a unit hypercube with
each component of the solution vector belonging to the interval
$[0,1]$. Furthermore, for all cases, solution algorithms were run
for a progressively increasing number of function evaluations from
100000 to 1.25 million.

\section{Testing Procedure}\label{sec:testing}
The modified DE algorithm will be compared against standard DE
and MBH following a rigorous testing procedure. A detailed
description of the testing procedure can be found in
\cite{cit:vasileAAS2008} and it is here summarized in Algorithm
\ref{alg:pro_conv} for a generic solution algorithm $A$ and a
generic problem $p$. Here $\bar{\mathbf{x}}(A,i)$ denotes the best
point observed during the $i$-th run of algorithm~$A$.

\begin{algorithm}[!h]
\caption {Testing Procedure} \label{alg:pro_conv}
\begin{algorithmic}[1]
\State Set to $N$ the max number of function evaluations for $A$
\State Apply $A$ to $p$ for $n$ times and set $j_s=0$
\ForAll{$i\in[1,...,n]$}\\
  Compute $\delta_f(\bar{\mathbf{x}}(A,i))=\mid f_{global}-f(\bar{\mathbf{x}}(A,i))\mid;$ and $\delta_x(\bar{\mathbf{x}}(A,i))= \|\mathbf{x}_{global}-\bar{\mathbf{x}}(A,i)
\|$
    \If{ ($\delta_f(\bar{\mathbf{x}}(A,i))<tol_f$) $\wedge$ ($\delta_x(\bar{\mathbf{x}}(A,i))<tol_x$)}
    $j_s=j_s+1$
    \EndIf
  \EndFor
\end{algorithmic}
\end{algorithm}

The index of performance $j_s$ is the number of successes of the
algorithm $A$. In the following we use the $f_{best}$ values
reported above in place of $f_{global}$ and we consider only
$\delta_f(\bar{\mathbf{x}}(A,i))$ and not
$\delta_x(\bar{\mathbf{x}}(A,i))$.

The success rate, $p_s=j_s/n$, will be used for the comparative assessment
of the algorithm performance instead of the commonly used best
value, mean and variance. Indeed, the distribution of the
function values is not Gaussian. Therefore, the average value can be
far away from the results returned with a higher frequency from a
given algorithm. In the same way, the variance is not a good
indicator of the quality of the algorithm because a high variance
together with a high mean value can correspond to the case in which
50\% of the results are close to the global optimum with the other
50\% far from it. Finally, statistical tests, such as the t-test,
that assume a Gaussian distribution of the sample can not be applied
to correctly predict the behavior of an algorithm. Instead, the
success rate gives an immediate and unique indication of the
algorithm effectiveness, and, moreover, it can be always represented
with a binomial probability density function (pdf), independent of
the number of function evaluations, the problem and the type of
optimization algorithm.

 A key point is setting properly the value of
$n$ to have a reliable estimate of the success probability of an
algorithm, or success rate $p_s=j_s/n$. Since the success is binomial (assumes values that are either 0 or
1) we can set a priori the value of $n$ to get the required
confidence on the value of $p_s$. A commonly adopted starting point
for sizing the sample of a binomial distribution is to assume a
normal approximation for the sample proportion $p_s$ of successes
(i.e. $p_s \thicksim N\lbrace\theta_p, \theta_p(1-\theta_p)/n
\rbrace$, where $\theta_p$ is the unknown true proportion of
successes) and the requirement that $ Pr [|p_s-\theta_p| \leq
d_{err}|\theta_p ]$ is at least equal to $ 1-\alpha_p$
\cite{adcock:97}. This leads to the expression $n \geq
\theta_p(1-\theta_p) \chi^{2}_{(1),\alpha_p}/ d_{err}^2$ that can be
approximated conservatively with
\begin{equation}\label{eq:nruns}
n \geq 0.25 \chi^{2}_{(1),\alpha_p}/ d_{err}^2
\end{equation}
valid for $\theta_p=0.5$. According to Eq. (\ref{eq:nruns}), an error $\leq
0.05$ ($d_{err}$ = 0.05) with a 95\% of confidence ($\alpha_p$ =
0.05) would require at least $n = 175$. If $n$ is extended to 1000, the error reduces to 0.020857. In the following, we will use $n=200$ and $N=1.25e6$ for tuning the algorithms, and $n=1000$ with variable $N$ to compare their performance. In fact, a reduced error is required to discriminate between the performance of two algorithms at low $N$.

The values of $tol_f$ for the four test cases are: $tol_f=0.0688$ km/s for Cassini with no DSM's, $tol_f=0.1111$
km/s for Cassini with DSM's, $tol_f=0.05778$ km/s for Rosetta and
$tol_f=0.05$ km/s for Messenger. The choice of these thresholds was dictated by the need to discriminate among different minima. At the same time they represent a reasonable margin on the total $\Delta v$. In fact, during standard preliminary designs, a margin between 3\% and 5\% is typically added for contingencies, while here all the selected thresholds are below 5\% of the value of the objective function. In other words, all the minima within the thresholds would be indistinguishable from a mission design point of view.

\subsection{Parameter Tuning}
Prior to running the tests on all the four cases, the key parameters
for DE and MBH were tuned to get the best performance. We used the
Rosetta case as a tuning example. The tuning of DE and MBH is used
also to tune IDEA. The tuning of MBH was relatively fast as there are only
two parameters: the size of the neighborhood and the number of samples before global restart.

Figure~\ref{fig:MBH_Rosetta_perfo} shows the performance of MBH on the Rosetta case for different values of $\Delta$ and $n_{samples}$. It can be noted that the performance tends to increase for a number of samples that tend to infinity. On the other hand, there is a peak of performance around $n_{samples}=30$. In the remainder of this paper, we will call MBH-GR the version of MBH implementing a global restart after 30 unsuccessful samples and we call MBH, the version with $n_{samples}=\infty$. The optimum $\Delta$ seems to be 0.1, therefore this value was used for all the test cases.

Note that the general trend of the performance of MBH does not change for the other cases and therefore the settings seem to be of general validity for this benchmark of test problems.

\begin{figure}[tb]
\begin{center}
 \includegraphics[width=8cm]{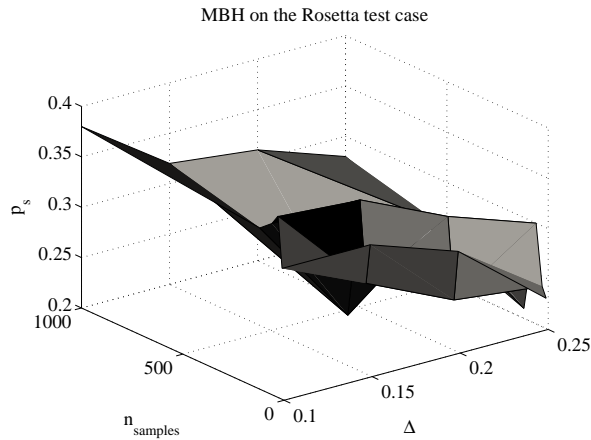}
\vskip -5mm
 \caption{Performance of MBH on the Rosetta test case} \label{fig:MBH_Rosetta_perfo}
\end{center}
\end{figure}

For the tuning of DE, we instead considered a grid of values for $F$,
$C_R$ and population size, for different strategies. Figs. \ref{fig:DE_Rosetta_perfo6} to \ref{fig:DE_Rosetta_perfo7pop} show the most significant
results. From these figures one can deduce that strategy \emph{DE/best/1/bin} with $F$
in the range $[0.8,\; 1]$ and a population below 200 individuals
would be a good choice. Alternatively strategy \emph{DE/rand/1/bin} can be used with a
smaller population and an $F$ in the range $[0.6, \;0.7]$. In both
cases $C_R$ is better at around 0.8 as the problem is not
separable and the components of the solution vector are
interdependent.

The combination of strategy and values of $F$, $n_{pop}$ controls
the speed of local convergence to a fixed point of the algorithm. In
the following tests, therefore, we decided to use strategy \emph{DE/best/1/bin} with two
sets of populations, $[5\,d, 10\,d]$, where $d$ is the
dimensionality of the problem, with single values of step-size and
crossover probability $F=0.75$ and $C_R=0.8$ respectively. The two
settings will be denoted with DE5c, DE10c.
The trends in Figs. \ref{fig:DE_Rosetta_perfo6} to \ref{fig:DE_Rosetta_perfo7pop} can be registered also for the other two test cases, although the performance of DE is much poorer than for Rosetta, therefore it can be argued that the settings have general validity. These settings are also in line with the theory developed by Zaharie in \cite{zaharie:02}.

 The settings of IDEA were derived from the individual tuning of
DE and MBH. In particular, we took a value $C_R=0.9$, as the
variables are not decoupled, a convergence strategy for the choice
of the indexes in Eq. (\ref{eq:de}) and a value $F=0.9$ together
with a small population to have a fast convergence but without
losing exploration capabilities. We used an $n_{pop}=20$ for
Cassini with no DSM's and for Messenger, while an $n_{pop}=40$ for
Rosetta and Cassini with DSM's. For all the test cases
$\delta_c=0.1$, $tol_{conv}=0.25$, while $\Delta=0.2$ for all the
cases except for Messenger for which we used $0.25$ instead. The parameter controlling the maximum
number of local restarts is $iun_{max}=6$ for Messenger,
$iun_{max}=2$ for Rosetta, $iun_{max}=+\infty$ for Cassini with and
without DSM's as no guided restart is applied.

\begin{figure}[tb]
\begin{center}
 \includegraphics[width=8cm]{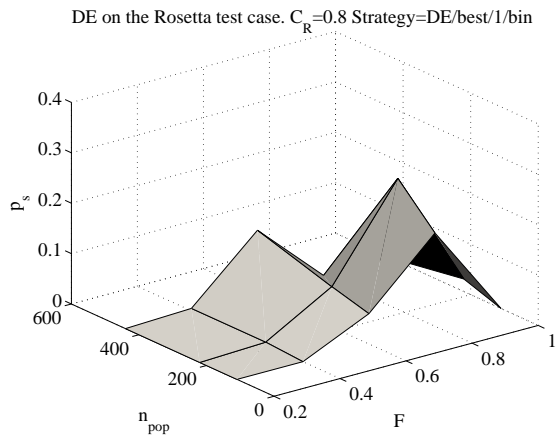}
\vskip -5mm
 \caption{Performance of Differential Evolution on the Rosetta test case, with strategy \emph{DE/best/1/bin} and $C_R$=0.8} \label{fig:DE_Rosetta_perfo6}
\end{center}
\end{figure}

\begin{figure}[tb]
\begin{center}
 \includegraphics[width=8cm]{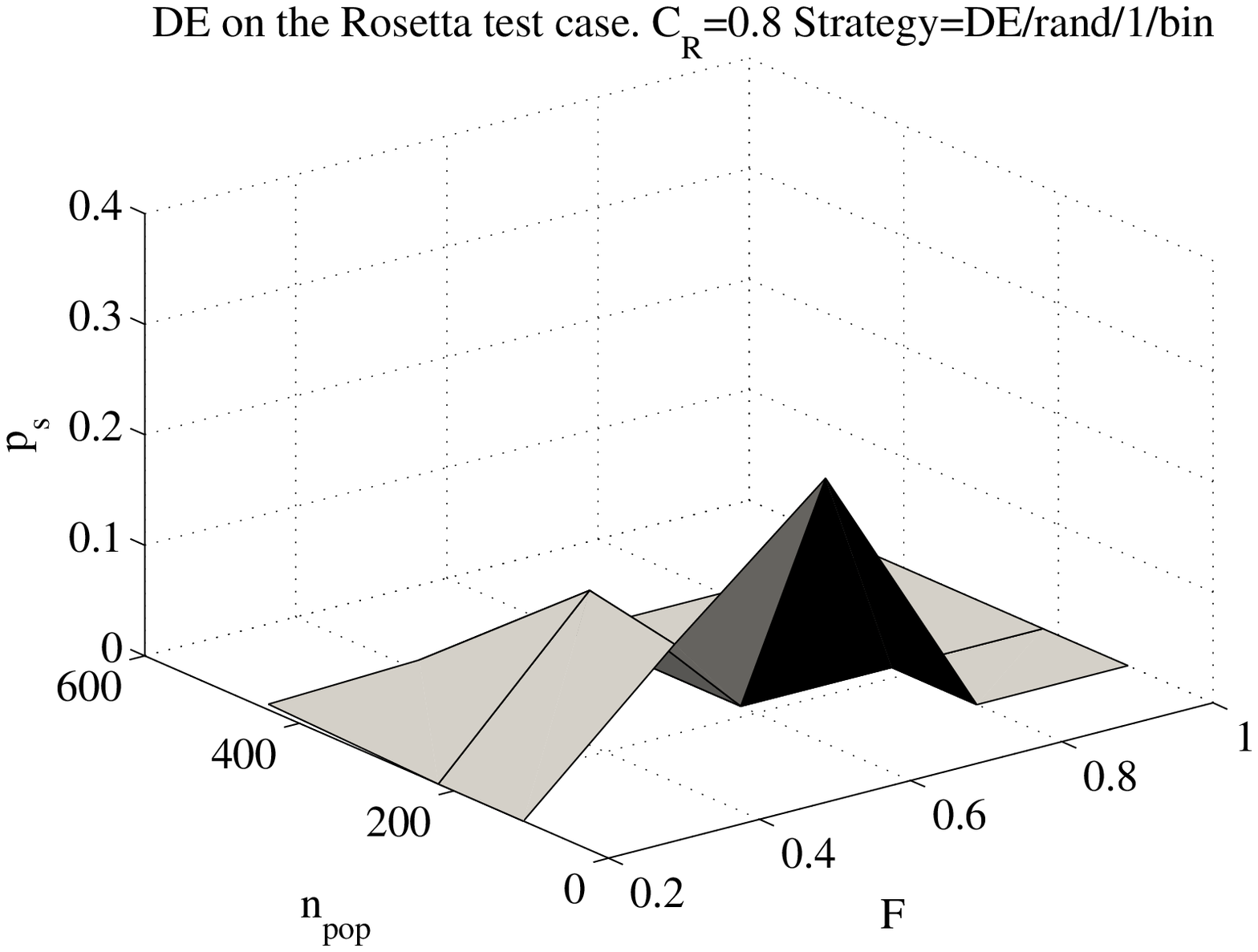}
\vskip -5mm
 \caption{Performance of Differential Evolution on the Rosetta test case, with strategy \emph{DE/rand/1/bin} and $C_R$=0.8} \label{fig:DE_Rosetta_perfo7}
\end{center}
\end{figure}

\begin{figure}[tb]
\begin{center}
 \includegraphics[width=8cm]{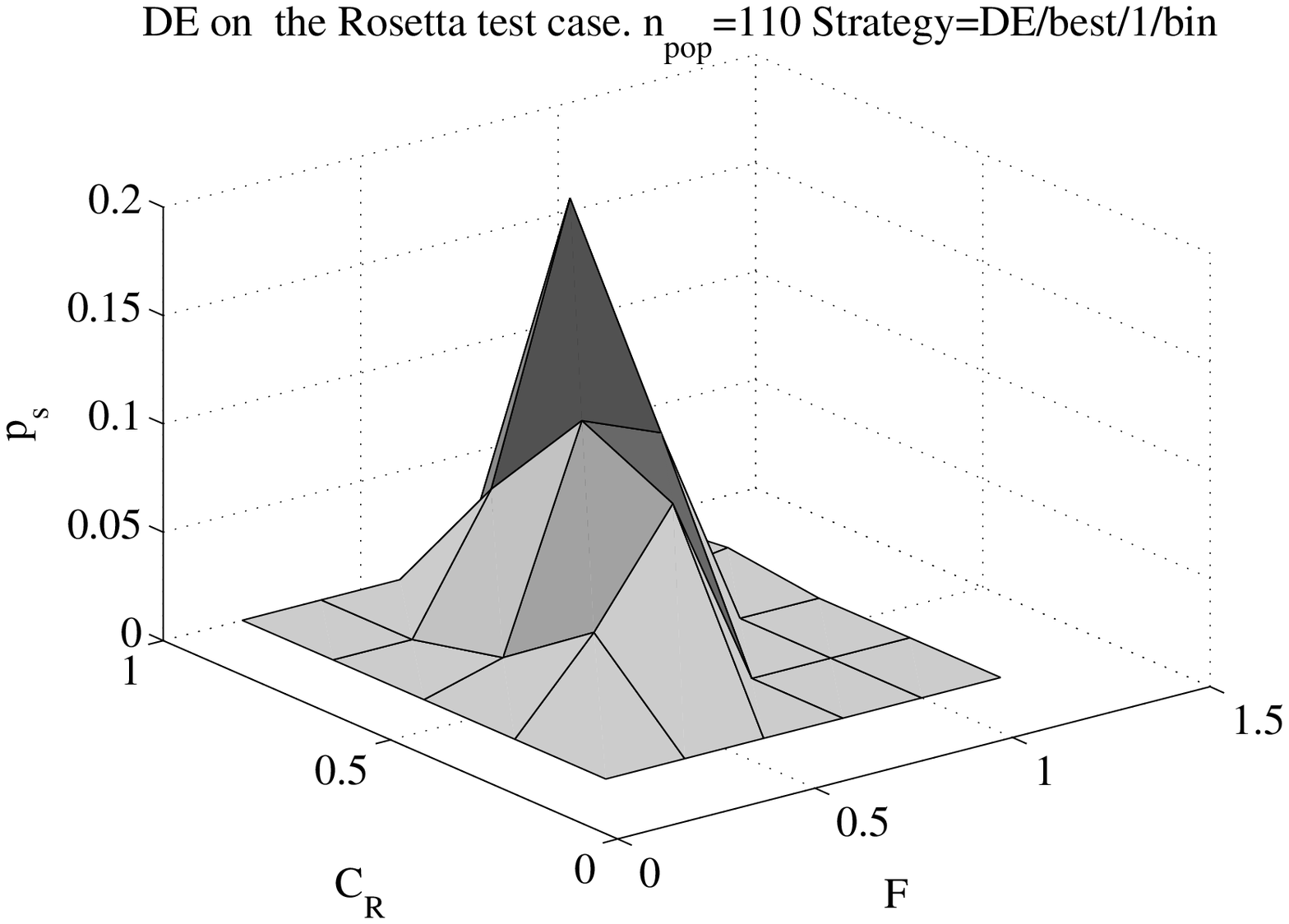} 
 \vskip -5mm
 \caption{Performance of Differential Evolution on the Rosetta test case, with strategy \emph{DE/best/1/bin} and $n_{pop}$=110} \label{fig:DE_Rosetta_perfo6pop}
\end{center}
\end{figure}

\begin{figure}[tb]
\begin{center}
 \includegraphics[width=8cm]{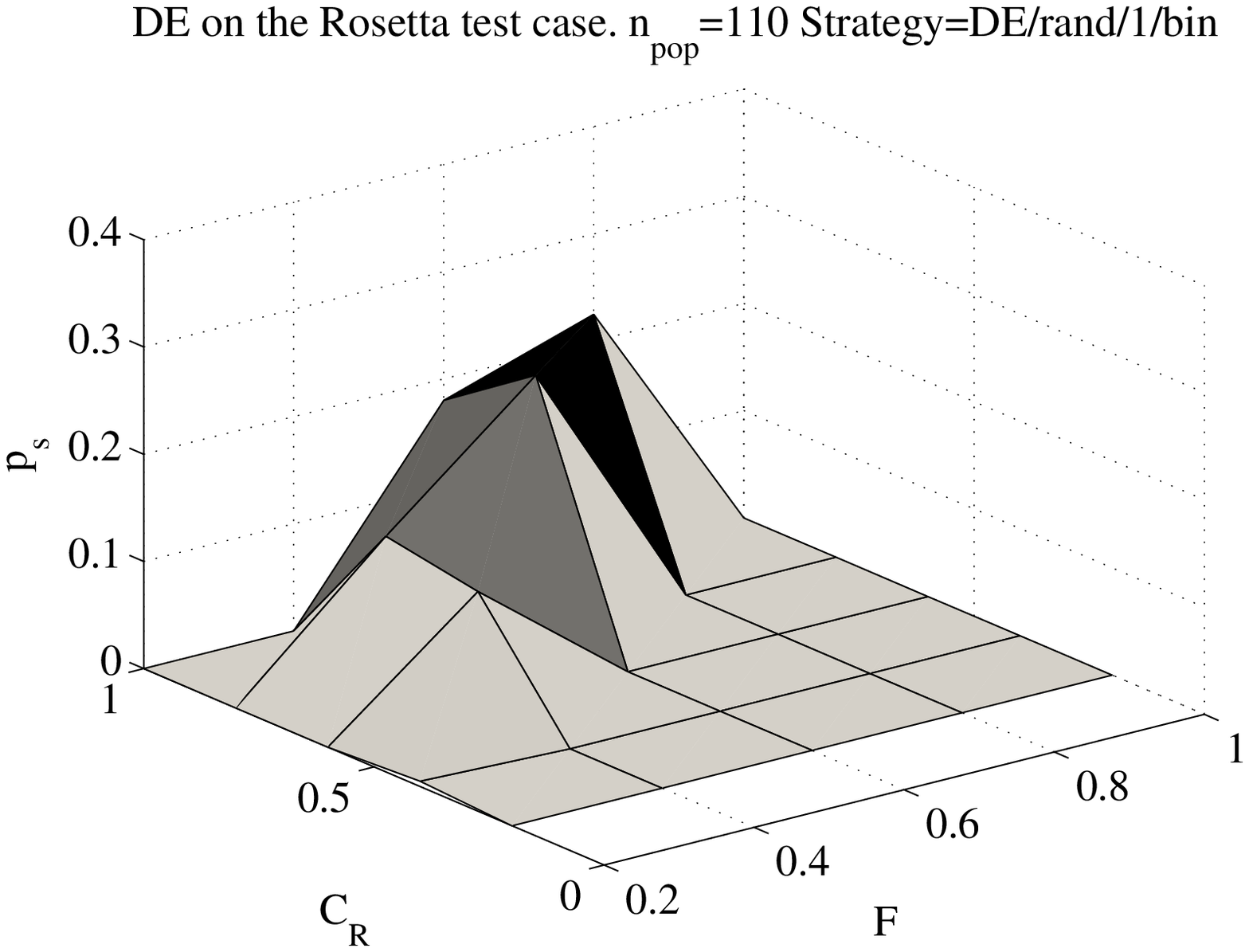} 
 \vskip -5mm
 \caption{Performance of Differential Evolution on the Rosetta test case, with strategy \emph{DE/rand/1/bin} and $n_{pop}$=110} \label{fig:DE_Rosetta_perfo7pop}
\end{center}
\end{figure}

\subsection{Test Results}
The performance of all the algorithms are summarized in Figs.
\ref{fig:Evolution_Cassini_nodsm}, \ref{fig:Evolution_Cassini_dsm},
\ref{fig:Evolution_Rosetta2} and \ref{fig:Evolution_Messenger2}. For
the Cassini case with no DSM's, with results shown in Fig. \ref{fig:Evolution_Cassini_nodsm}, IDEA performs exceptionally well
compared to all the other algorithms providing a success rate over
50\% at 200000 function evaluations. Both versions of DE
exhaust their exploration capabilities quite soon and an increase of
the number of function evaluations does not help as the population
has collapsed to a fixed point. The performance is not nearly as good on the Cassini case with DSM's, as seen in Fig. \ref{fig:Evolution_Cassini_dsm}.

Although IDEA is still better than the other algorithms, and in particular than standard
DE, it is comparable to MBH up to 600000 function evaluations and
achieves a moderate 30\% as best result at 1.25 million function
evaluations. This poor performance seems to be due mainly to
problems of local convergence. Note, however, how the general trend
suggests that IDEA is not stagnating as the success rate is steadily
increasing for an increasing number of function evaluations. On the
contrary DE seems to reach a flat plateau. Even for the Rosetta
case, IDEA displays exceptionally good results with DE5c, DE10c, MBH
and MBH-GR substantially equivalent until $N \leq 800k$. After that,
DE reaches a plateau and stops exploring. Both IDEA and MBH,
instead, show a positive monotonic trend until $N = 1.25M$, but IDEA
outperforms both versions of MBH.

In the Messenger case, all algorithms do not perform particularly well if $N \leq 500k$. Due to
the structure of the problem, IDEA needs more time to converge under
the $tol_{conv}$ threshold and only after many re-sampling
iterations is able to show good performance. On the other hand,
all DE's do not display any significant improvement for more than
$500k$ evaluations and even MBH can be considered equally
ineffective (the difference between DE and MBH is lower then the
confidence interval).

\begin{figure}[tbh]
\begin{center}
\vskip -5mm
 \includegraphics[width=9.5cm]{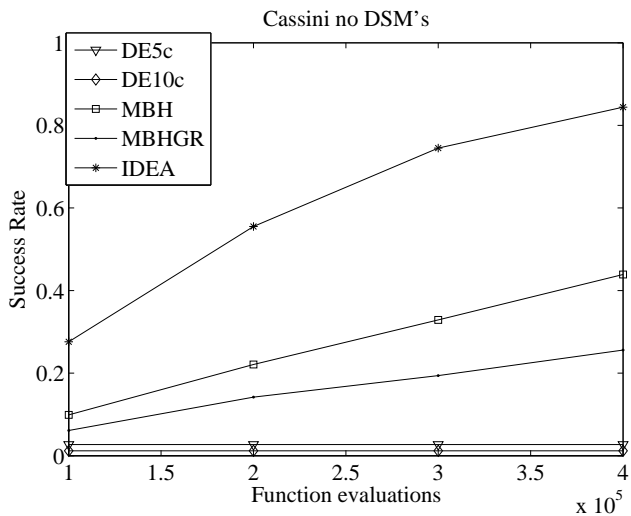}
\vskip -3mm
 \caption{Variation of the success
 rate with the number of function evaluations, for
 the Cassini without DSM's test case} \label{fig:Evolution_Cassini_nodsm}
\end{center}
\end{figure}

\begin{figure}[tbh]
\begin{center}
\vskip -5mm
 \includegraphics[width=9.5cm]{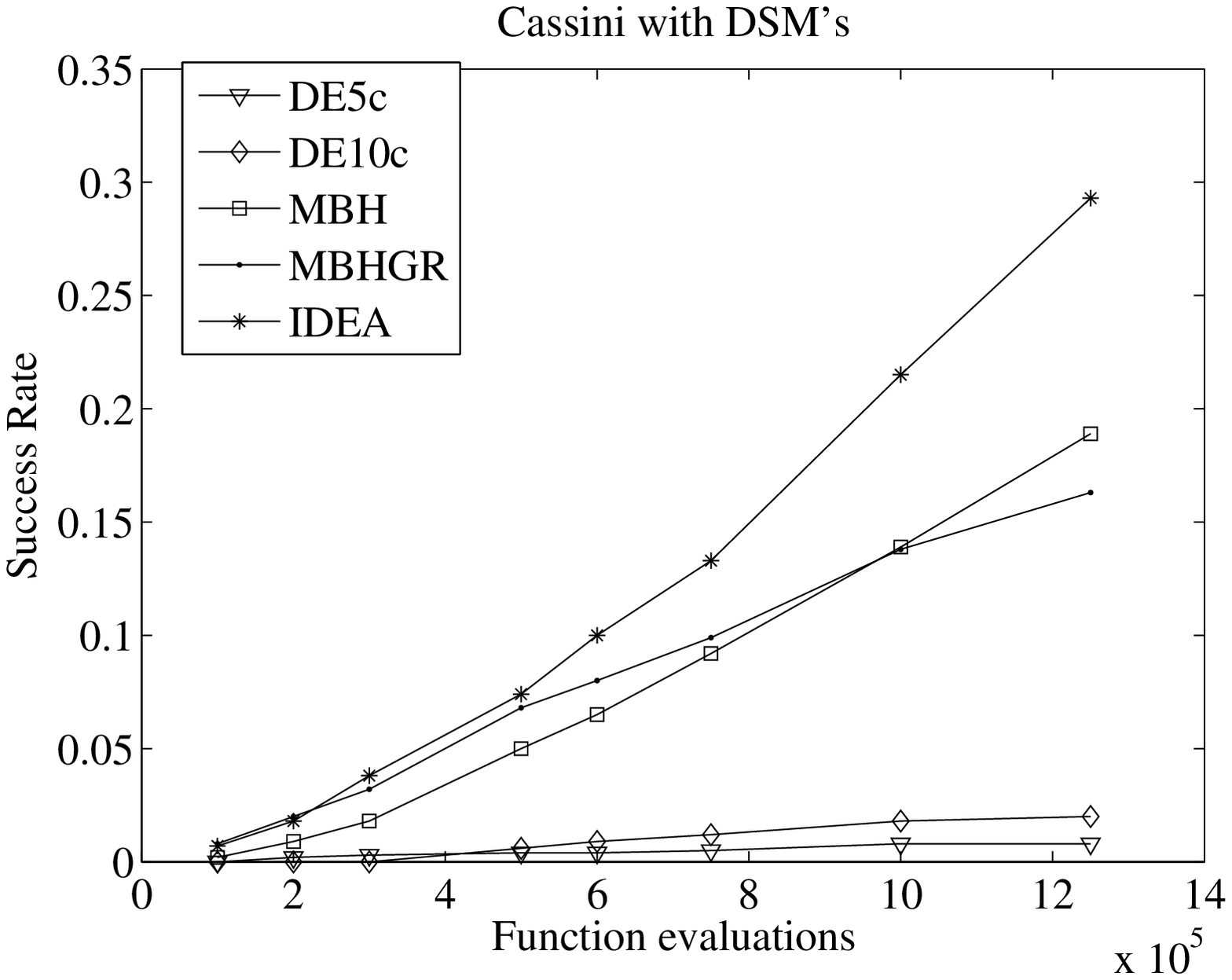}
\vskip -3mm
 \caption{Variation of the success rate with
 the number of function evaluations, for the Cassini with DSM's test case}
 \label{fig:Evolution_Cassini_dsm}
\end{center}
\end{figure}

\begin{figure}[tbh]
\begin{center}
\vskip -5mm
 \includegraphics[width=9.5cm]{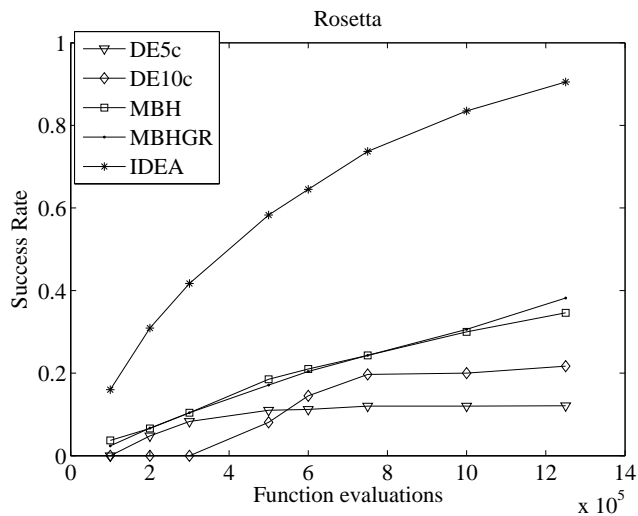}
\vskip -3mm
 \caption{Variation of the success rate with the number of function evaluations, for the Rosetta test case}
 \label{fig:Evolution_Rosetta2}
\end{center}
\end{figure}

\begin{figure}
\begin{center}
\vskip -5mm
 \includegraphics[width=9.5cm]{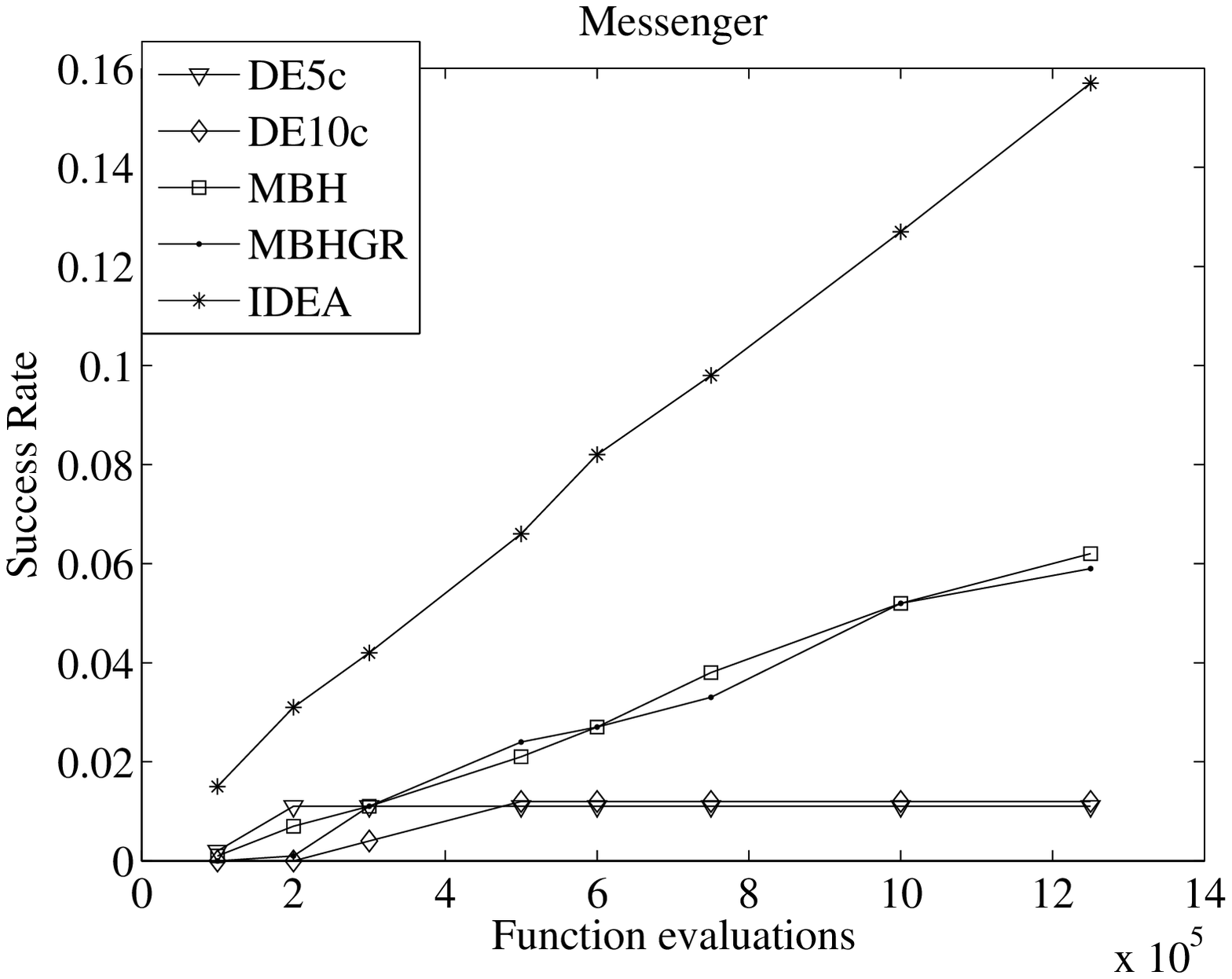}
\vskip -3mm
 \caption{Variation of the success rate with the number of function evaluations, for the Messenger test case}
 \label{fig:Evolution_Messenger2}
\end{center}
\end{figure}

\section{Search Space Analysis}

The different behaviors of IDEA, DE and MBH on the four test cases
can be understood with an analysis of the structure of the search
space. The collection of the results from the tests can be used to
deduce some properties of the problems within the benchmark and to
predict the behavior of the solution algorithms. An understanding of
the characteristics of the benchmark is required to generalize the
result of the tests. In fact, every consideration on the performance
of the algorithms is applicable only to problems with similar
characteristics.

All local minima found in all the tests by the applied global
methods were grouped according to the value of their objective
function. Specifically, the range of values of the objective
function for each test was divided in a finite number of levels,
with each group of minima associated to a particular level.

\begin{figure}[!h]
\begin{center}
\subfigure[\label{fig:1funnel}]{\includegraphics[width=6.9cm]{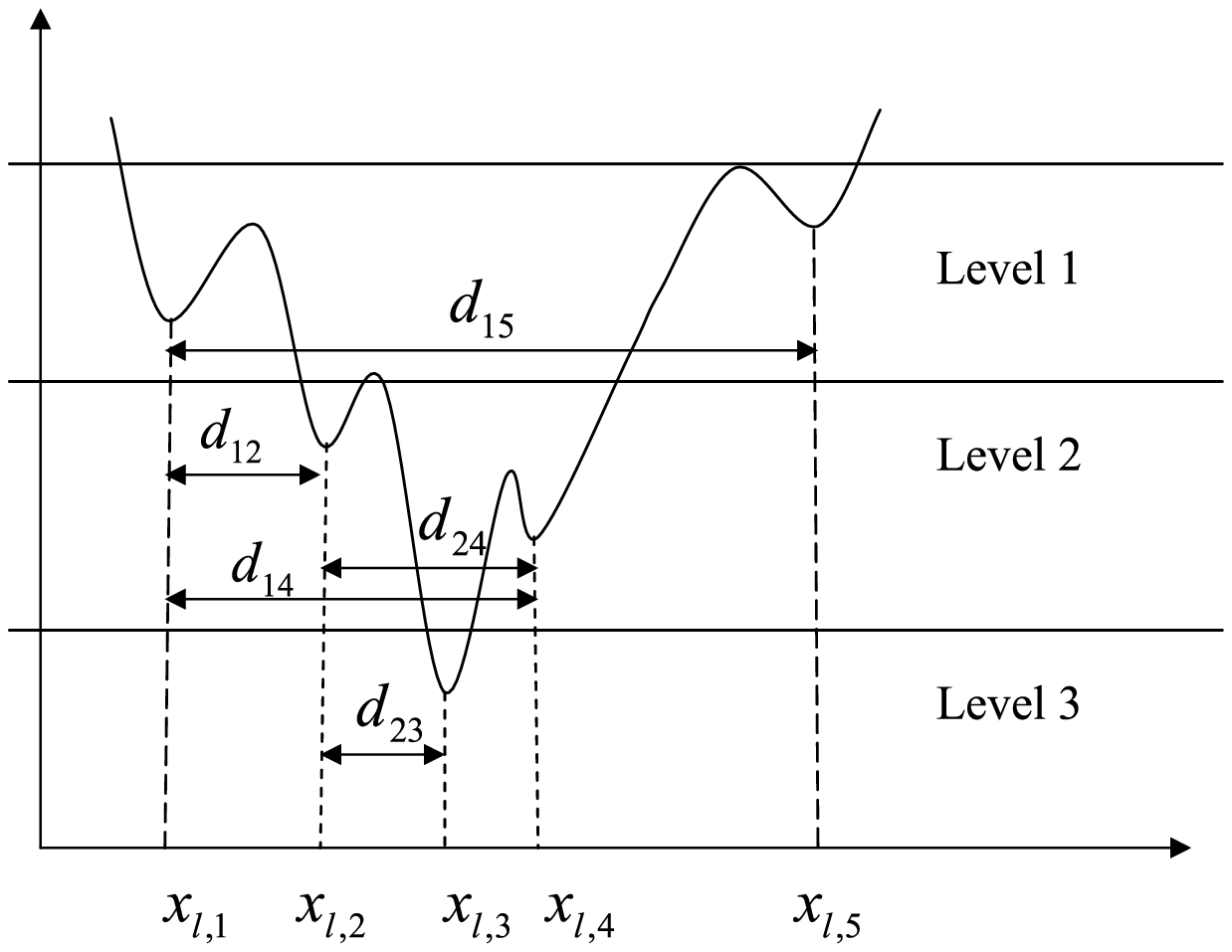}}
\subfigure[
\label{fig:2funnels}]{\includegraphics[width=8.0cm]{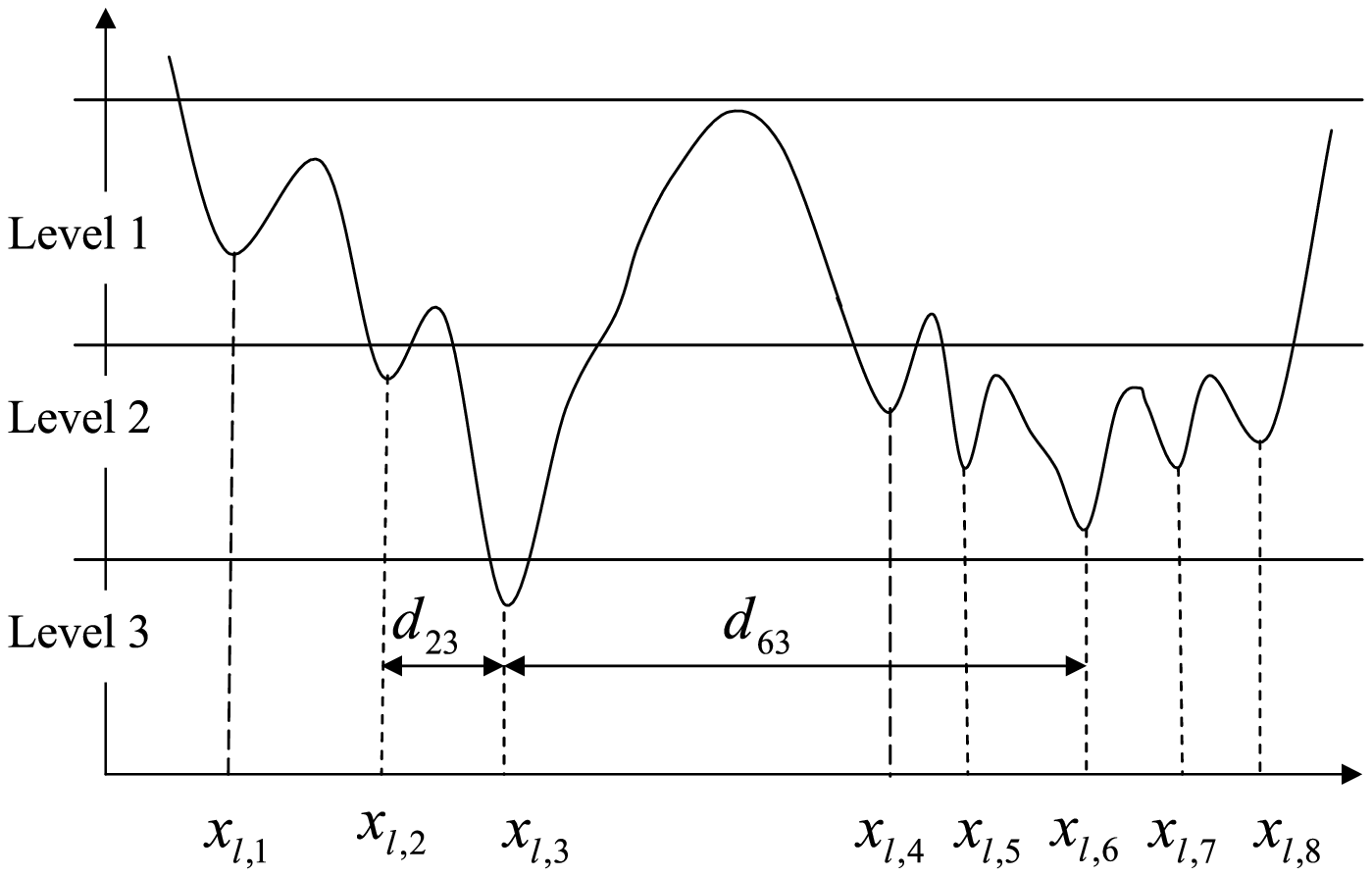}}
\caption{One dimensional example of a) single funnel structure and
b) bi-funnel structure.} \label{fig:funnel1D}
\end{center}
\end{figure}

\begin{figure}
\begin{center}
 \includegraphics[width=8cm]{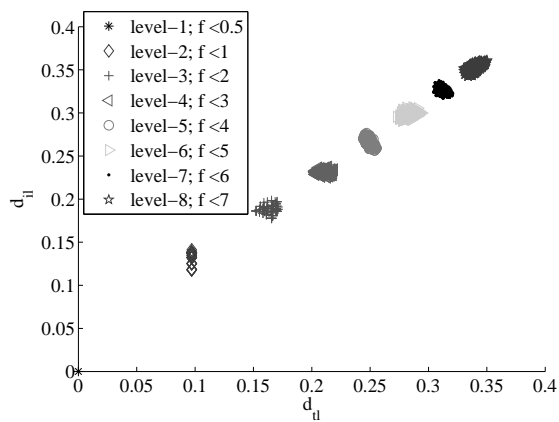}
\vskip -5mm
 \caption{Relative distance of the local minima for the Rastrigin
 function in 5 dimensions} \label{fig:distanze_rastrigin}
\end{center}
\end{figure}

\begin{figure}
\begin{center}
 \includegraphics[width=8cm]{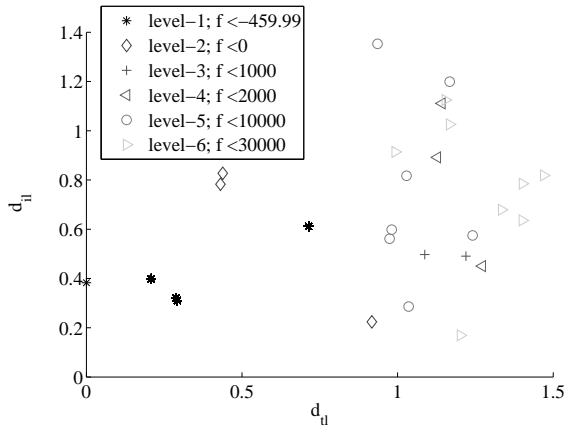}
\vskip -5mm
 \caption{Relative distance of the local minima for
 the Schwefel function in 5 dimensions} \label{fig:distanze_shweffel}
\end{center}
\end{figure}

\begin{figure}[!h]
\begin{center}
\subfigure[\label{fig:Rastrigin1D}]{\includegraphics[width=8.0cm]{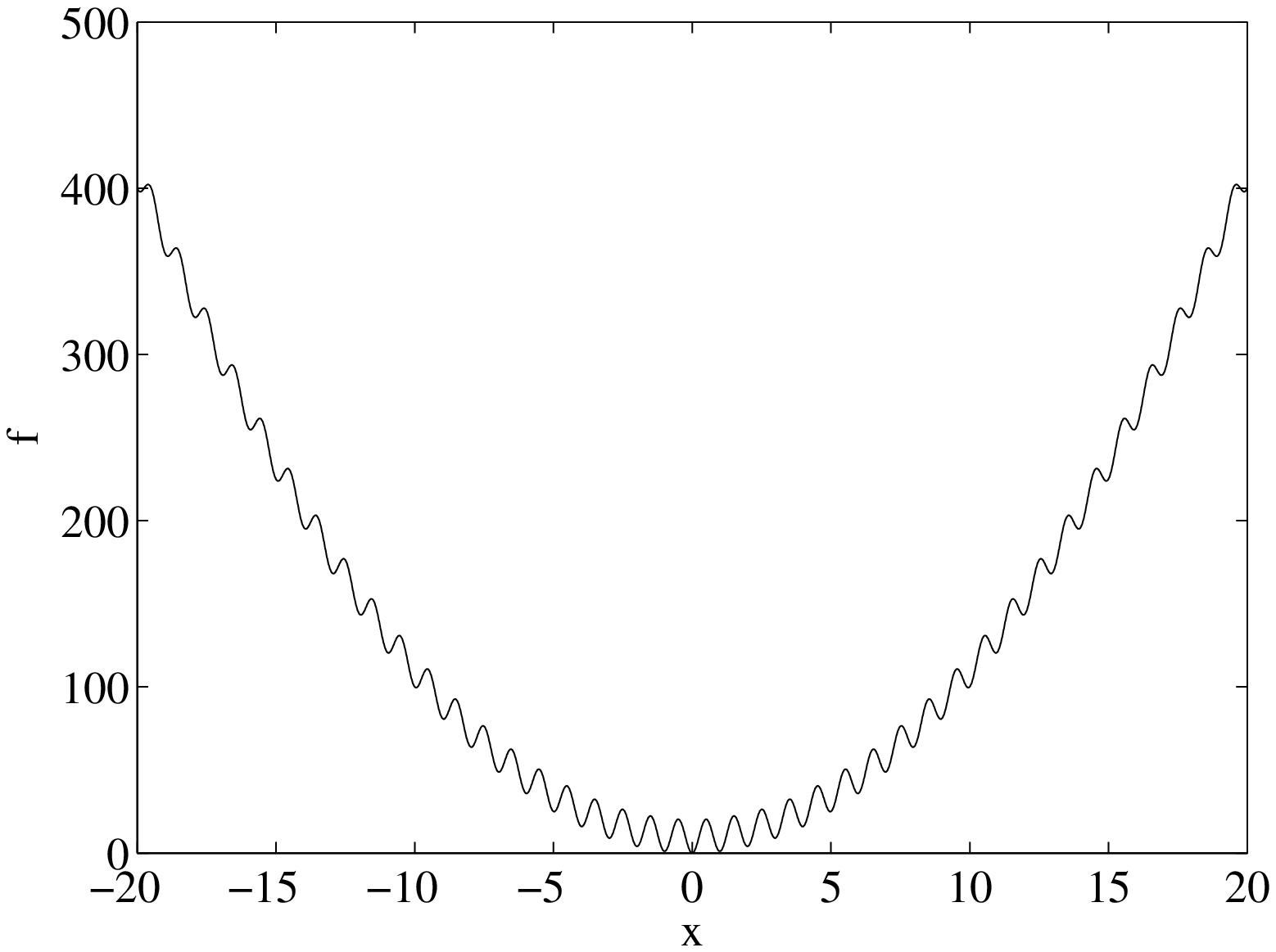}}
\subfigure[
\label{fig:Schweffel1D}]{\includegraphics[width=8.0cm]{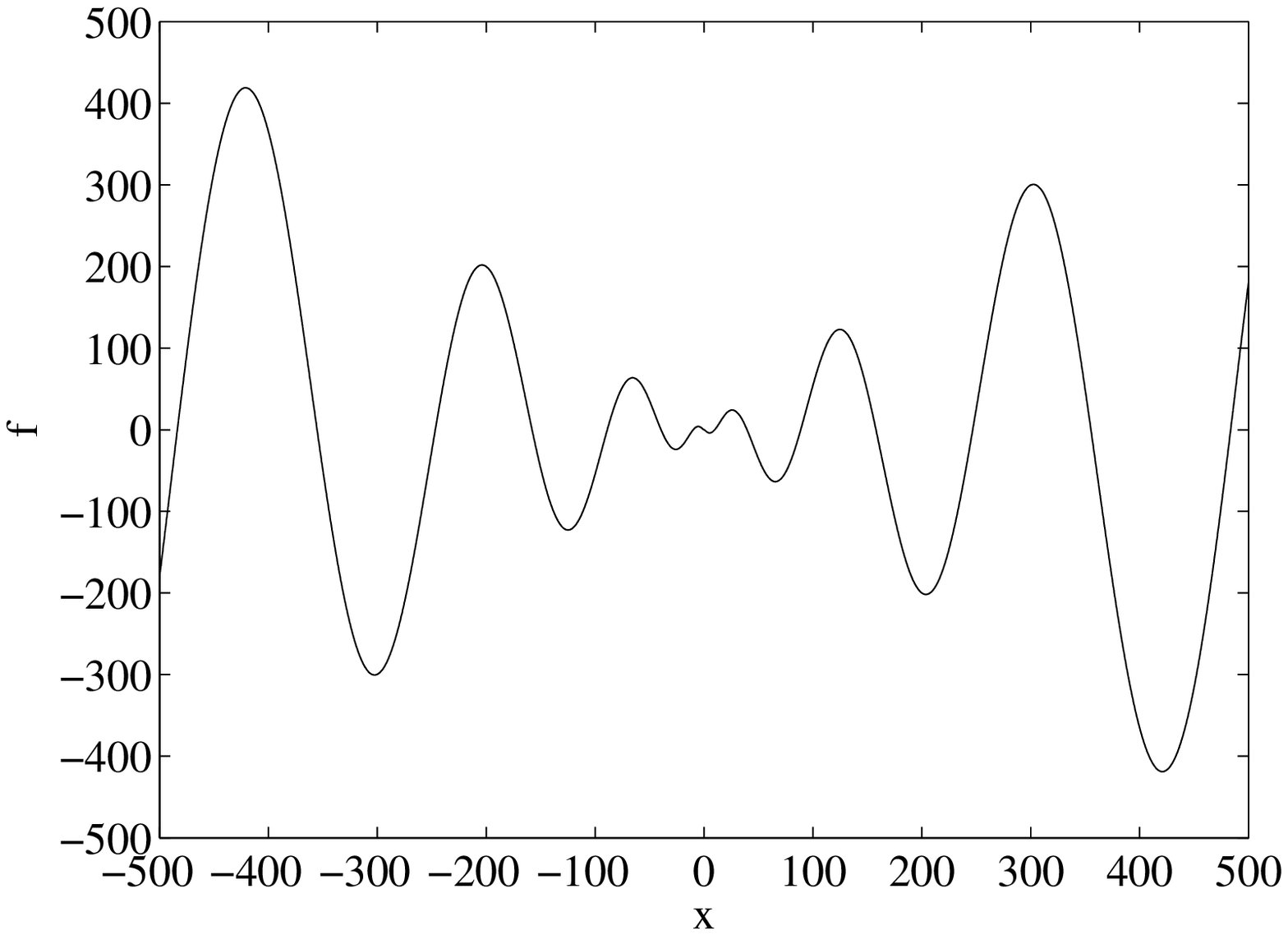}}
\caption{One dimensional example of a) Rastrigin function and b)
Schwefel function.} \label{fig:funnelFunc1D}
\end{center}
\end{figure}

Then, we computed the average value of the relative distance of each
local minimum with respect to all other local minima within the same
level  $d_{il}$ (or intra-level distance), and the average value of the relative distance of each local minimum with
respect to all other local minima in the lower level $d_{tl}$ (or
trans-level distance). The $d_{tl}$ for the lowest level is the
average distance with respect to the best known solution.

The values $d_{il}$ and $d_{tl}$ give an immediate representation of
the diversity of the local minima and the probability of a
transition from one level to another. More precisely, a cluster of
minima with a large intra-level distance and a small trans-level
distance suggests an easy transition to lower values of the
objective function and a possible underlying funnel
structure \cite{Lea:00}. In particular in the case of funnel
structures, the values of $d_{tl}$ and $d_{il}$ should progressively
go to zero. A $d_{il}$ that does not go to zero or clusters with
different values of $d_{tl}$, are the cue to a possible underlying
multi-funnel structure.

Fig.~\ref{fig:funnel1D} provide two illustrative examples. Fig. \ref{fig:1funnel} represents a single
funnel structure with five local minima $x_{l,i}$ where $i=1,...,5$,
and three levels. The intra-level distance at level 2, given by the
distance $d_{24}=x_{l,2}-x_{l,4}$, is lower than
$d_{15}=x_{l,1}-x_{l,5}$, the intra-level distance at level 1. The
same is true for the trans-level distance at level 2, $d_{23}$,
which is lower than the trans-level distance at level 1,
$(d_{12}+d_{14})/2$, for minimum $x_{l,1}$.

Fig. \ref{fig:2funnels}, instead, represents a bi-funnel structure. In this case, the minima
around $x_{l,6}$ have an average intra-level distance lower than
$x_{l,2}$ but a trans-level distance $d_{63}$ much larger than
$d_{23}$. Thus, the two minima $x_{l,2}$ and $x_{l,6}$ will appear
on the $d_{tl}$-$d_{il}$ graph with different values of $d_{il}$ and
$d_{tl}$. If the threshold of level 3 were increased above the
objective value of $x_{l,6}$, then all minima of level 2 would have
similar $d_{tl}$, but the $d_{il}$ at level 3 would not go to zero.

This analysis method is an extension of the work of Reeves and
Yamada \cite{reeves:98}, and is used to concisely visualize the
distribution of the local minima. The definition of the levels
depends on the groups of minima of interest, and can be derived from
mission constraints or by an arbitrary subdivision of the range of
values of the objective function. Different subdivisions reveal
different characteristics of the search space but give only
equivalent cues on the transition probability.

As an example of application of the proposed search space analysis method,
Figs. \ref{fig:distanze_rastrigin} and \ref{fig:distanze_shweffel} show the $d_{il}$-$d_{tl}$ plots for two
well known functions: Rastrigin and Schwefel. In both
cases the dimensionality is 5. The Rastrigin function is known to
have a single funnel and to be globally convex (see the example of
one-dimensional Rastrigin function in Fig. \ref{fig:Rastrigin1D}).
The clusters in the $d_{il}$-$d_{tl}$ plane are aligned along the
diagonal and converge to 0. In the case of the Schwefel function,
which has no single funnel structure (see the example of one-dimensional
Schwefel function in Fig. \ref{fig:Schweffel1D}), the clusters are
more scattered and both the $d_{il}$ and $d_{tl}$ values tend to
remain quite high.

\begin{figure}[!h]
\begin{center}
\subfigure[\label{fig:distance_EVVEJS}]{
\includegraphics[width=8.1cm]{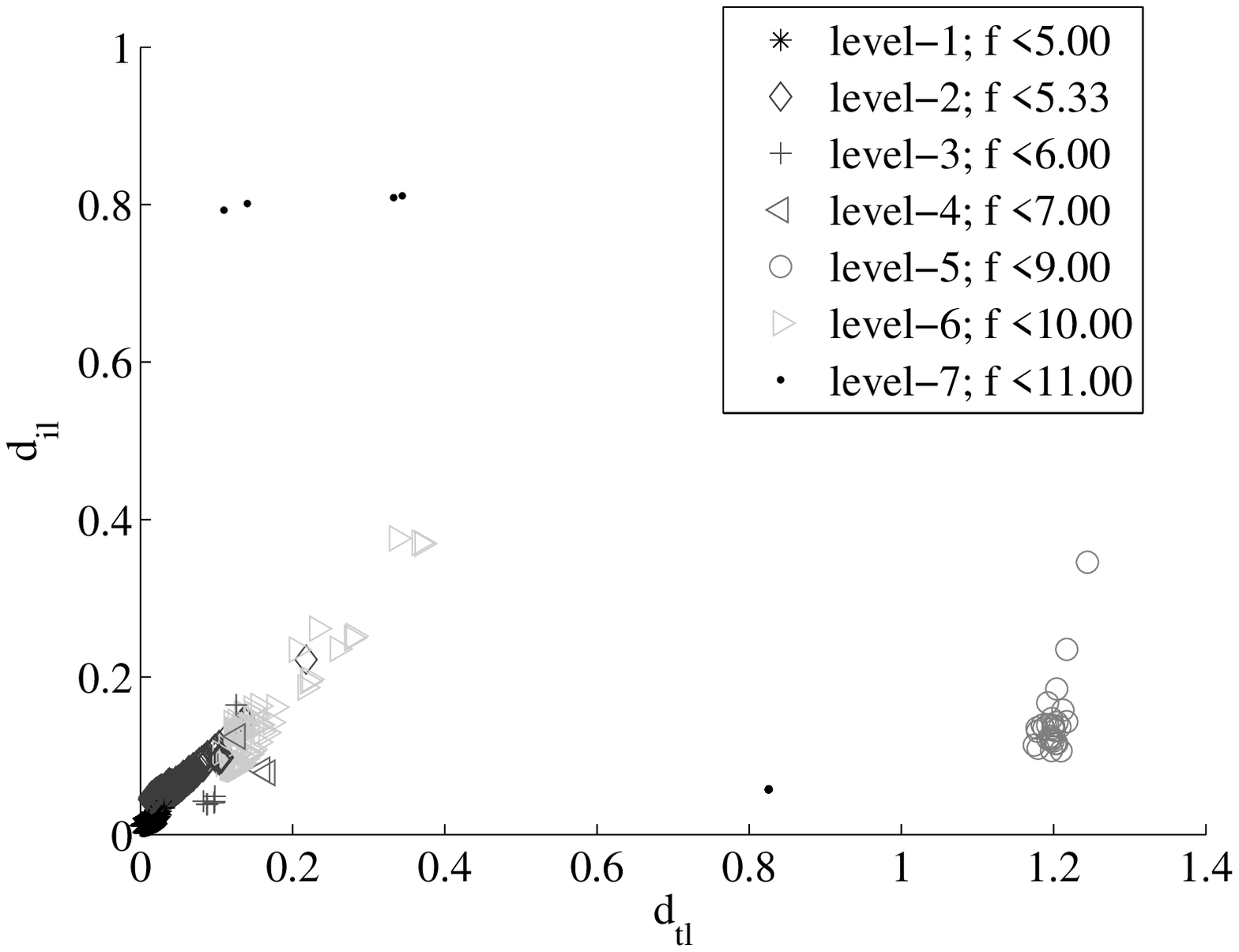}}
\subfigure[\label{fig:distance_EVVEJSDSM}]{
\includegraphics[width=8.cm]{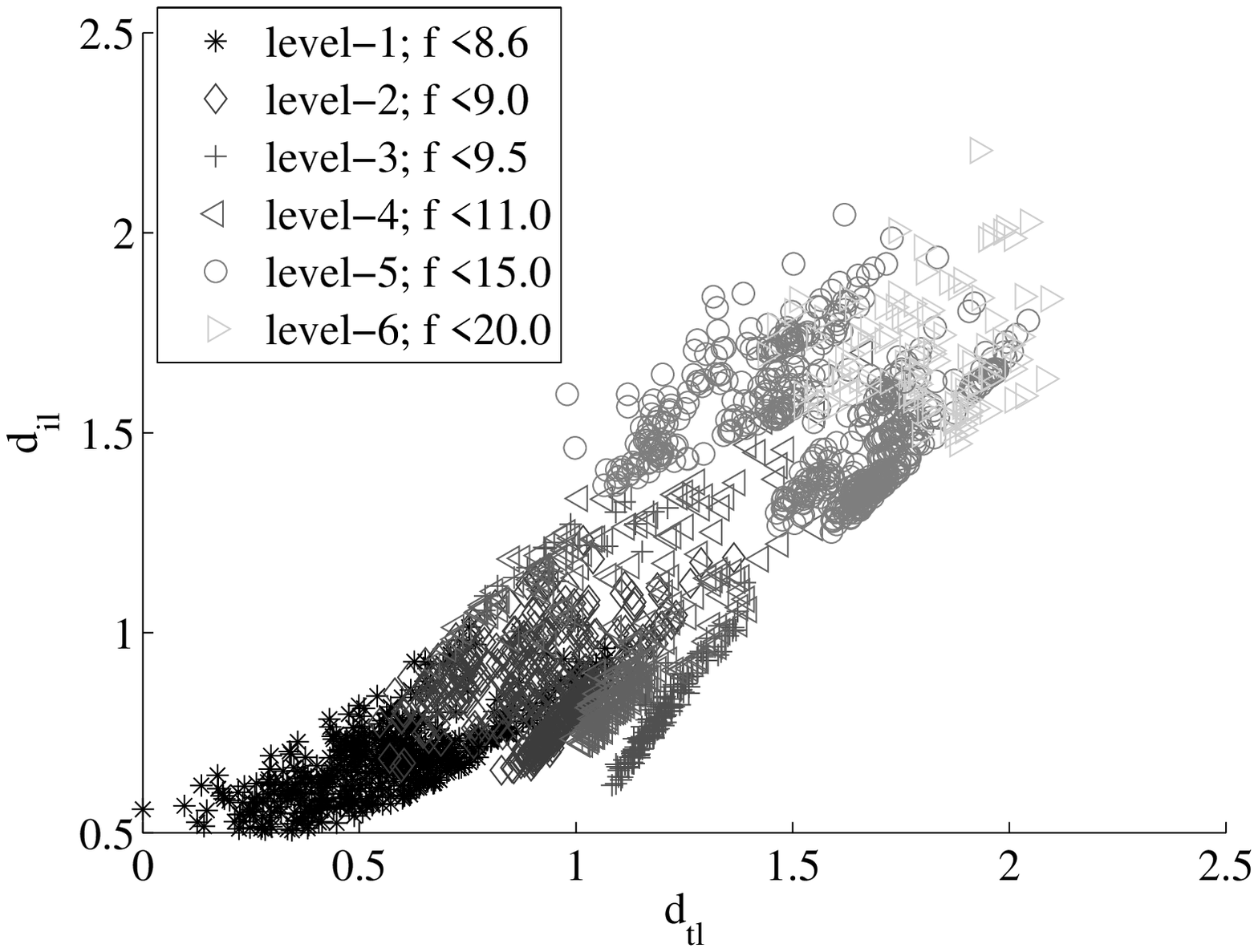}}
 \caption{Relative distance of the local minima for Cassini: a) without DSM's and b) with DSM's.}
\end{center}
\end{figure}

When applied to the benchmark of space trajectory problems, the analysis approach seems to suggest (see Fig.~\ref{fig:distance_EVVEJS}) that the Cassini
case with no DSM's has a structure similar to the one in
Fig.~\ref{fig:2funnels}: the minima at level 5 belong to two
distinct clusters with substantially different $d_{tl}$ and
$d_{il}$. The clusters, corresponding to levels 1, 2 and 3, have
values of $d_{tl}$ and $d_{il}$ both lower than 0.2, which suggests
an easy transition from one level to an another. Thus, below an
objective function of 7.5 km/s there seems to be an underlying
single funnel structure. Note that an easy transition among levels
favors the search mechanism of MBH as demonstrated by the test
results.

Fig.~\ref{fig:distance_EVVEJSDSM} shows that both $d_{tl}$ and
$d_{il}$ progressively tend to zero up to a certain point, after
which $d_{tl}$ goes to zero while $d_{il}$ remains almost unchanged.
The figure suggests that, in the Cassini case with DSM, there is a
 single funnel structure for function values above 9.5 km/s,
while below the minima are scattered and distant from each other.

From Fig. \ref{fig:distanze_Rosetta} we can argue that Rosetta
has a wide zone containing the global minimum together with a number
of local minima with similar cost function. In fact the blue stars
belonging to level 1 are distributed over values of $d_{tl}$ in the
range [0.0, 1.9] and $d_{il}$ in the range [1.2, 1.6]. It should be
noted that from a practical point of view, solutions with a
difference in the total $\Delta v$ of less than 50 m/s are
equivalent, especially in the preliminary design phase of a mission.
Therefore, all the blue stars belonging to level 1 are potentially
good candidates for a space mission. Note that, the solutions at
level 1 are very far apart. Thus, a transition within this group and
the identification of the global minimum can be problematic.

In addition to the solutions at level 1, the search space of Rosetta
presents two interesting groups at level 3 and level 4. These two
sets of local minima have a relatively low inter-distance $d_{il}$
but are far from the global minimum, since $d_{tl}$ is quite high.
They represent two strong attractors that explain why, for example,
the actual Rosetta mission has a total $\Delta v$ a bit higher than
1.7 km/s.

The Messenger mission problem presents a different
structure, see Fig. \ref{fig:distanze_Messenger}. The structure of
the search space of the Messenger test case appears to be
characterized by many small basins separated from one another. Even the level under the threshold adopted to compute the success rate contains two distinct, yet similar, regions.

\begin{figure}
\begin{center}
 \includegraphics[width=8cm]{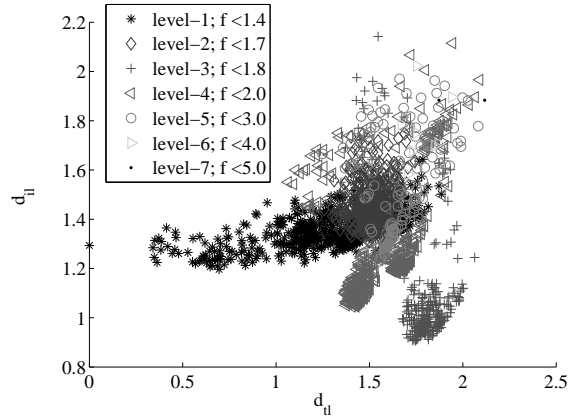}
\vskip -5mm
 \caption{Relative distance of the local minima for Rosetta} \label{fig:distanze_Rosetta}
\end{center}
\end{figure}

\begin{figure}
\begin{center}
 \includegraphics[width=8.0cm]{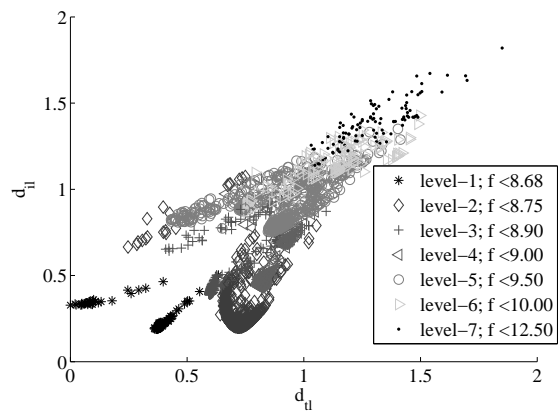}
\vskip -5mm
 \caption{Relative distance of the local minima for Messenger} \label{fig:distanze_Messenger}
\end{center}
\end{figure}

\section{Conclusion}
In this paper we casted an evolutionary heuristic in the form of a discrete map. The discrete map can be seen as a variant of Differential Evolution. We then demonstrated empirically and
theoretically that the map has
a number of fixed points to which it converges asymptotically under
some assumptions on the structure of the search space. This result
suggested the development of an algorithm that outperforms
Differential Evolution on some difficult space trajectory design
problems. The novel algorithm displays a remarkable robustness, i.e.,
the ability to repeatedly converge to solutions with a value of
the cost function close to the best known solution to date. Furthermore, it
shows the desirable characteristic of increasing its performance
with the number of function evaluations, reaching in some cases
success rates which are up to around 25 times higher than the standard DE.
These considerations can be generalized to all problems with similar characteristics of the search space.

\section{Acknowledgments}
The authors would like to thank Dr. Oliver Sch\"{u}tze for his suggestions on some of the theoretical aspects of this work.

\bibliographystyle{IEEEtran}
\bibliography{idea_tec3}

\clearpage
\listoffigures
\clearpage
\listoftables

\end{document}